\newcommand{\ie}{{\emph{i.e.\/}}}
\providecommand{\keywords}[1]{\newline Keywords: #1}
\providecommand{\pacs}[1]{\newline PACS: #1}
\newtheorem{definition}{Definition}
\newtheorem{remark}{Remark}
\DeclareMathOperator{\tr}{tr}
\DeclareMathOperator{\Tr}{Tr}
\DeclareMathOperator{\diag}{diag}
\newcommand{\R}{\ensuremath{\mathbb{R}}}
\newcommand{\C}{\ensuremath{\mathbb{C}}}
\newcommand{\ket}[1]{\ensuremath{|#1\rangle}}
\newcommand{\bra}[1]{\ensuremath{\langle#1|}}
\newcommand{\ketbra}[2]{\ensuremath{\ket{#1}\bra{#2}}}
\newcommand{\proj}[1]{\ensuremath{\ketbra{#1}{#1}}}
\newcommand{\braket}[2]{\ensuremath{\langle{#1}|{#2}\rangle}}
\newcommand{\1}{{\rm 1\hspace{-0.9mm}l}}
\newcommand{\ii}{\ensuremath{\mathrm{i}}}
\newcommand{\TT}{\mathcal{T}}
\newcommand{\UU}{\mathcal{U}}
\newcommand{\HH}{\mathcal{H}}
\newcommand{\PP}{\mathcal{P}}
\newcommand{\diaguni}{\ensuremath{\mathcal{DU}}}
\newcommand{\diagmodul}{\ensuremath{\mathcal{D}^{\leq 1}_d}}
\newcommand{\ketV}[1]{\ensuremath{|#1\rangle\!\rangle}}
\newcommand{\braV}[1]{\ensuremath{\langle\!\langle#1|}}
\newcommand{\ketbraV}[2]{\ensuremath{\ketV{#1}\braV{#2}}}
\newcommand{\projV}[1]{\ensuremath{\ketbraV{#1}{#1}}}
\newtheorem{lemma}{Lemma}
\newtheorem{theorem}{Theorem}
\newtheorem{corollary}{Corollary}
\newtheorem{proposition}{Proposition}
\renewcommand{\Re}{\mathrm{Re}}
\def\>{\rangle}
\def\<{\langle}
\begin{document}
\author{Zbigniew Pucha{\l}a}
\affiliation{Institute of Theoretical and Applied Informatics, Polish Academy
of Sciences, ulica Ba{\l}tycka 5, 44-100 Gliwice, Poland}
\affiliation{
Faculty of Physics, Astronomy and Applied Computer Science, Jagiellonian 
University, ulica Stanis\l{}awa \L{}ojasiewicza 11, 30-348 Krak\'o{}w, Poland}

\author{{\L}ukasz Pawela\footnote{Corresponding author, E-mail:
lpawela@iitis.pl}}
\affiliation{Institute of Theoretical and Applied Informatics, Polish Academy
of Sciences, ulica Ba{\l}tycka 5, 44-100 Gliwice, Poland}
\affiliation{Institute of Informatics, National Quantum Information Centre,
Faculty of Mathematics, Physics and Informatics, University of Gda{\'n}sk, ul.
Wita Stwosza~57, 80-308 Gda{\'n}sk, Poland}

\author{Aleksandra Krawiec}
\affiliation{Institute of Theoretical and Applied Informatics, Polish Academy
of Sciences, ulica Ba{\l}tycka 5, 44-100 Gliwice, Poland}
\affiliation{Institute of Mathematics, University of Silesia, ul. Bankowa 14, 
40-007 Katowice, Poland}

\author{Ryszard Kukulski}
\affiliation{Institute of Theoretical and Applied Informatics, Polish Academy
of Sciences, ulica Ba{\l}tycka 5, 44-100 Gliwice, Poland}
\affiliation{Institute of Mathematics, University of Silesia, ul. Bankowa 14, 
40-007 Katowice, Poland}

\title[Single-shot discrimination of quantum measurements]{Strategies for 
optimal single-shot discrimination of quantum measurements}

\date{April 24, 2018}

\begin{abstract}
In this work we study the problem of single-shot discrimination of von Neumann
measurements, which we associate with measure-and-prepare channels. There are
two possible approaches to this problem. The first one is simple and does not
utilize entanglement. We focus only on the discrimination of classical
probability distributions, which are outputs of the channels. We find necessary
and sufficient criterion for perfect discrimination in this case. A more
advanced approach requires the usage of entanglement. We quantify the distance
between two measurements in terms of the diamond norm (called sometimes the
completely bounded trace norm). We provide an exact expression for the optimal
probability of correct distinction and relate it to the discrimination of
unitary channels. We also state a necessary and sufficient condition for perfect
discrimination and a semidefinite program which checks this condition. Our main
result, however, is a cone program which calculates the distance between the
measurements and hence provides an upper bound on the probability of their
correct distinction. As a by-product, the program finds a strategy (input state)
which achieves this bound. Finally,  we provide a full description for the cases
of Fourier matrices and mirror isometries.
\end{abstract}

\pacs{ 03.67.Ac, 03.65.Ta, 03.65.Ud}
\keywords{measurement discrimination, von Neumann measurements, 
semidefinite programming}

\maketitle

\section{Introduction}

The state of a quantum system is inherently non-observable. Despite this,
quantum states have been the focus of quantum theory since its beginning as they
provide a way of computing the value of any observable. The picture changes when
we consider two quantum states and ask about their distance. This quantity can,
in principle, be measured, and provides an upper bound on the probability of
discriminating between these states. The latter was shown by in
Helstrom~\cite{helstrom1976quantum}. Such problems are fundamental in quantum
information science and quantum physics, and have attracted a lot of attention
in recent years. These range from experimental studies
\cite{mosley2006experimental,clarke2001experimental,mohseni2004optical},
theoretical considerations of finite-dimensional random quantum
states~\cite{mejia2016difference} to asymptotic properties of random quantum
states~\cite{puchala2016distinguishability}. This approach can be extended to
quantum channels via the Choi-Jamio{\l}kowski
isomorphism~\cite{choi1975completely,jamiolkowski1972linear}. Helstrom's result
can be easily extended to this case and once again we obtain a simple expression
for the upper bound for the probability of discriminating two quantum channels.
There is, however, one additional feature in this setting, which is the input
state. This input state is what we call ``the strategy'' for discriminating
quantum channels. Due to the complicated structure of the set of quantum
channels, the problem has been studied in the limit of large input and output
dimensions~\cite{nechita2016almost}. In this paper we focus on the problem of
discriminating quantum measurements which are viewed as a subclass of quantum
channels.

The problem of discriminating quantum measurements is of the utmost importance
in modern quantum information science. Imagine we have an unknown measurement 
device, a black-box. The only information we have is that it performs one of 
two measurements, say $\mathcal{S}$ and $\TT$. Our goal is two-fold. First, we 
want to tell whether it is possible to discriminate $\mathcal{S}$ and $\TT$ 
perfectly, \ie\ with probability equal to one. If this is not the case, we 
would like to know the upper bound of such a probability. Secondly, we need to 
devise an optimal strategy for this process, which means finding an optimal 
input state that achieves the highest possible probability of discrimination.

This issue has already attracted a lot of attention from the scientific
community. It is well established that the discrimination between unitary
operations does not require entanglement~\cite{duan2007entanglement}. In
\cite{cao2016local}, authors have presented a scheme for complete local
discrimination for various kinds of unitary operations. The results in
\cite{cao2015perfect} indicate that it is possible to perfectly distinguish
projective measurements with the help of measurement--unitary
operation--measurement scheme. A single-shot scenario was studied in
\cite{sedlak2014optimal} for $m$ measurements and $n$ outcomes. The authors have
also managed to show that ancilla-assisted discrimination can outperform the
ancilla-free one for perfect distinguishability. The case when the black box can
be used multiple times was investigated by the authors of
\cite{wang2006unambiguous}, who have also proven that the use of entanglement
can improve the discrimination. This issue was also studied in
\cite{d2001using}, where it was shown that entanglement in general improves
quantum measurements for either precision or stability. According to the authors
of \cite{sedlak2009unambiguous}, the optimal strategy for discrimination between
two unknown unitary channels is closely related to problem of discriminating
pure states. They also postulate that entanglement is a key factor in designing
an optimal experiment for a comparison. In the work of A. Jen{\v{c}}ov{\'a} and
M. Pl{\'a}vala, \cite{jenvcova2016conditions}, the optimality conditions for
testers in distinguishability of quantum channels were obtained by the use of
semidefinite programming. The optimal strategies with the use of either
entangled or not entangled  states for the discrimination of Pauli channels were
compared by M. Sacchi in \cite{sacchi2005optimal}. 

In this work we study the problem of discriminating von Neumann positive
operator valued measures (POVMs). We associate a POVM with a quantum channel and
study the distinguishability of these channels. These channels output classical
probability distributions, hence  we first apply known results for
distinguishing classical probability distributions. The results are applicable
for the case when we are not able to utilize entangled states to perform
discrimination. This, somewhat limited, approach gives us a good starting point
towards our main result. We obtain that entanglement-assisted discrimination of
von Neumann POVMs is related to the discrimination of unitary channels. This
allows us to find a simple condition for perfect discrimination of measurements.
Additionally, we are able to write this result as a semidefinite program (SDP)
which is numerically efficient. The problem gets more complex in the case when
the probability of correct discrimination is strictly less then one. In this
case we have a convex program which calculates the maximum probability of
correct discrimination. Furthermore, it gives us the optimal input state for
this case.

This paper is organized as follows. In Section~\ref{sec:formulation} we
formulate our problem by introducing necessary concepts concerning
discrimination of measurements with and without the assistance of entanglement.
Mathematical framework necessary for stating our results is introduced in
Section~\ref{sec:mathematical-framework}. In
Section~\ref{sec:discrimination-without-entanglement} we consider the case of
discrimination without the assistance of entanglement and provide a necessary
and sufficient criterion for perfect discrimination of two von Neumann
measurements in this case. Entanglement-assisted discrimination of two von
Neumann measurements is analyzed in
Section~\ref{sec:entanglement-assisted-discrimination}. In this section we state
an exact expression for the optimal probability of correct distinction of two
measurements and relate it to the discrimination probability of unitary
channels. We provide a necessary and sufficient condition for perfect
discrimination of two von Neumann measurements as well as a semidefinite program
which is able to check this condition. We also state a simple necessary and a
simple sufficient conditions for perfect discrimination. Finally, we formulate a
convex program which provides the optimal input state for discrimination of two
von Neumann measurements. In Section~\ref{sec:special-cases} we analyze special
cases, that is we consider the discrimination problem of measurement in the
Fourier basis of any dimension and a measurement in the computational basis. We
derive the optimal input state for this task and identify the cases when
entanglement is (not) necessary. Similarly, we consider mirror isometries and
provide a full description of this case. Concluding remarks are presented in the
final Section~\ref{sec:final-remarks}, while proofs of main theorems are
relegated to Appendix~\ref{sec:appendix}.

\section{Formulation of the problem}\label{sec:formulation}

Consider the following scenario. There is an unknown measurement device and the
only thing we know about it is that it performs one of two known measurements,
call them $\mathcal{S}$ and $\mathcal{T}$. We put a state into the device and
our goal is to decide which of the measurements is performed. We aim to identify
the assumptions needed for perfect discrimination of $\mathcal{S}$ and $\TT$.
Further, we want to construct the optimal discrimination scheme for this task. 
In
the case when perfect distinctions is not possible, we would like to bound
from above the probability of correct discrimination as well as derive a scheme
which enables a correct guess with the optimal probability.

The second field of our interest is finding the optimal strategy for the
discrimination. In other words, we would like to know which state should be used
to provide the greatest possible probability of correct discrimination.

In the simplest approach we may think of measurements $\mathcal{S}$ and $\TT$
as measure-and-prepare channels outputting diagonal states, that is classical
probability distributions, see Fig.~\ref{fig:non-ent-disc}. This notion will be
formalized in later sections. Thus, the simplest approach to this problem is to
consider the distance between probability distributions. We can use the distance
between these distributions as an upper bound on the probability of correct
discrimination. In this setting it is also straightforward to find the optimal
state for the discrimination.

Of course, there is another possibility. As we are dealing with quantum states,
we can utilize entanglement. Hence, we input one part of the entangled state
into the unknown measurement device and later use the other part to strengthen
the inference. The scheme of this process is presented in
Fig.~\ref{fig:diamond}.

\begin{figure}[h]
\centering
\includegraphics[width=0.7\linewidth]{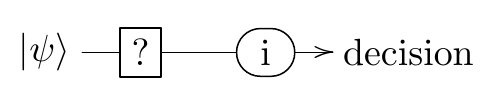}
\caption{
A schematic representation of the setup for distinguishing 
measurements without entanglement.}
\label{fig:non-ent-disc}
\end{figure}

\begin{figure}[h]
\centering 
\includegraphics[width=0.9\linewidth]{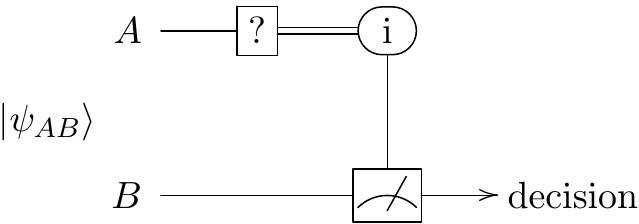} 

\caption{ A schematic representation of the setup for distinguishing
measurements using entangled states. One of two known measurements $\mathcal{S}$
or $\TT$ is performed on part $A$ of the input state $\ket{\psi_{AB}}$. We use
the output label $i$ and perform a conditional binary measurement
$\mathcal{R}_i$ on part $B$. By the use of its output we formulate our guess, 
that is we decide  
weather
the measurement performed on part $A$ was $\mathcal{S}$ or $\TT$.
}\label{fig:diamond}
\end{figure}

\section{Mathematical framework}\label{sec:mathematical-framework}
Let us introduce the following notation. We denote the
matrices of dimension $d_1 \times d_2 $ over the field $\C$ as $M_{d_1,d_2}$. To
simplify, square matrices will be denoted $M_{d}$. The subset of $M_d$
consisting of Hermitian matrices of dimension $d$ will be denoted by $\HH_d$,
while the set of positive semidefinite matrices of dimension $d$ by $\HH_d^+$.
The set of quantum states $\rho$, that is positive semidefinite operators of 
dimension
$d$ such that $\Tr \rho = 1$, will be denoted $\Omega_d$. The set of unitary
matrices of size $d$ will be denoted by $\UU_d$, and its subset
of diagonal unitary matrices of dimension $d$ will be denoted by $\diaguni_d$. 
We 
will also need a linear  mapping transforming $M_{d_1}$ into $M_{d_2}$, which 
will be denoted
\begin{equation}
\Phi : M_{d_1} \to M_{d_2}.
\end{equation}
Finally, we introduce a special subset of all mappings $\Phi$, called quantum 
channels, which are completely positive and trace 
preserving. In other words, the first condition reads
\begin{equation}
\forall A \in \HH^+_{d_1^2} \;\;\; (\Phi \otimes \1)(A) \in \HH_{d_2 d_1}^+,
\end{equation}
while the second one implies 
\begin{equation}
\forall X \in M_{d_1} \tr\Phi(X) = \tr(X).
\end{equation}

The most general form of describing quantum measurements utilizes the notion of
positive operator valued measures (POVMs). In this case a measurement $\TT$ is
given by a set of positive operators $\left\{ T_1, \ldots, T_n \right\}$, for
which we impose the condition  $\sum_i T_i = \1$. Each $T_i \in \HH_d^+$ is
called an effect associated with the label $i$.

While performing a measurement on some quantum state $\rho \in \Omega_d$, the
probabilities of obtaining each of the outcomes $i$ are $p_i=\tr \rho T_i$. Such
measurements can be considered as measure-and-prepare channels. The action of a
channel $\TT$ is given by
\begin{equation}
\TT(\rho) = \sum_{i=1}^n p_i \proj{i}.
\end{equation}

We will be interested in projective rank-one measurements. In this case we have
$n=d$. We will denote such measurements as $\PP_U$. Here $U\in \UU_d$ and the
effects are $P_i = \proj{u_i}$, where $\ket{u_i} = U\ket{i}$, i.e. the
$i$\textsuperscript{th} column of $U$. We arrive at
\begin{equation}
\PP_{U}(\rho) = \sum_{i=1}^d \bra{u_i} \rho \ket{u_i} \proj{i}.
\end{equation}

Now we introduce the bijection between linear operators and vectors in the form 
of the vectorization operation $|X\rangle\rangle$. It is defined for base 
operators as $|(\ketbra{i}{j})\rangle\rangle=\ket{i}\ket{j}$ and uniquely 
extended from linearity. We also recall the well-known equality
\begin{equation}
(A \otimes B) |X\rangle\rangle = |AXB^\top\rangle\rangle,
\end{equation}
where $A \in M_{d_1,d_2}$, $B \in M_{d_3,d_4}$ and $X \in M_{d_3,d_1}$.
For any square matrix $C$ we denote by $\diag(C)$ the linear operation which
gives the diagonal of the matrix $C$ and its conjugate operation 
$\diag^\dagger(v)$
which gives a square diagonal matrix with vector $v$ on the diagonal.

Let us now consider linear mappings transforming  square matrices into square 
matrices 
\ie\ $\Phi: M_{d_1} \to M_{d_2}$. It is well known that quantum channels are a 
special subclass of such mappings. 
\begin{definition}
Consider $\Phi: M_{d_1} \to M_{d_2}$. We define its completely bounded trace
norm, also known as a diamond norm, as
\begin{equation}
\|\Phi\|_\diamond = \max_{\|X\|_1=1} \| \left(\Phi \otimes \1\right) (X) \|_1.
\end{equation}
\end{definition}
It can be shown~\cite{watrous}, that for Hermiticity-preserving $\Phi$ we may 
restrict maximization to rank-1 orthogonal projectors of the form 
$\ketbra{x}{x}$.

There exists a linear bijection between linear mappings $\Phi: M_{d_1} \to
M_{d_2}$ and matrices $M_{d_1d_2}$ which was discovered by
Choi~\cite{choi1975completely} and
Jamio{\l}kowski~\cite{jamiolkowski1972linear}. The operator corresponding to
quantum channel $\Phi$, denoted $J(\Phi)$, can be explicitly obtained as
\begin{equation}
J(\Phi) = \sum_{i,j=1}^{d_1} \Phi(\ketbra{i}{j}) \otimes \ketbra{i}{j}.
\end{equation}

It has the following properties:

\begin{enumerate}
\item $\Phi$ is Hermiticity-preserving if and only if $J(\Phi) \in \HH_{d_1 
d_2}$,\label{it:hp}

\item $\Phi$ is completely positive if and only if $J(\Phi) \in \HH_{d_1 
d_2}^+$,\label{it:cp}

\item $\Phi$ is trace-preserving if and only if $\Tr_1 J(\Phi) = 
\1$.\label{it:tp}
\end{enumerate}
From these properties it follows that every completely positive $\Phi$ is
necessarily Hermiticity-preserving. Moreover, the difference of completely 
positive 
mappings is Hermiticity-preserving. Finally, $\Phi$ is a quantum channel if 
it has properties~\ref{it:cp} and~\ref{it:tp}.

Note that in case of a measurement $\TT$,
$J(\TT)$ takes the form of a block diagonal matrix with transposed effects on
the diagonal, that is $J(\TT) = \sum_{i=1}^d \proj{i} \otimes T_i^\top$.

For Hermiticity preserving $\Phi$, we have the following well-known bounds for 
the
diamond norm~\cite{nechita2016almost,jenvcova2016conditions,watrous}

\begin{equation}
\frac{1}{d_1} \| J(\Phi) \|_1 \leq \| \Phi \|_\diamond \leq \|\Tr_1 |J(\Phi)| 
\|.\label{old_ineq}
\end{equation}

The celebrated result by Helstrom~\cite{helstrom1976quantum} gives an upper
bound on the probability of correct distinction between two quantum channels
$\Phi$ and $\Psi$ in terms of their distance with the use of the diamond norm
\begin{equation}
p \leq \frac12 + \frac14 \| \Phi - \Psi \|_\diamond.
\end{equation}
The above inequality can be saturated with an appropriate choice of measurements
on the output space.

Furthermore, for Hermiticity-preserving $\Phi$, we have the following 
alternative formula for the diamond norm~\cite{chiribella2009theoretical,bisio2011quantum,watrous}

\begin{equation}\label{eqn:diamond-sqrt}
\|  \Phi \|_\diamond = \max \{\left\|(\1\otimes \sqrt{\rho}) J(\Phi) (\1\otimes 
\sqrt{\rho})\right\|_1 : \rho \in \Omega_{d_1}\}.
\end{equation}
The state $\rho$, for which $\|  \Phi \|_\diamond=\left\|(\1\otimes 
\sqrt{\rho}) 
J(\Phi) (\1\otimes 
\sqrt{\rho})\right\|_1$, will be called a \textit{discriminator}.

To complete the mathematical introduction let us recall the definition of total 
variational distance between the probability vectors.

\begin{definition}
Given two discrete probability distributions, represented by vectors $p,q \in 
\R^d$, their total variation distance is defined as
\begin{equation}
\| p-q \|_1= \sum_{i=1}^{d} |p_i-q_i| = 2 \max_{\Delta \subseteq 
\{1,\ldots,d\}} \left( \sum_{a \in \Delta} p_a - q_a \right).
\end{equation}
\end{definition}
\section{Discrimination without 
entanglement}\label{sec:discrimination-without-entanglement}
\subsection{Discrimination of classical probability distributions}
Let us consider a simple approach to the discrimination of measurements. The
idea is to distinguish discrete random variables with distributions given by
probability vectors obtained after performing the measurements on some state
$\rho$. The following proposition states the upper bound for correct
discrimination between two measurements in the case we do not use entanglement.

\begin{proposition}\label{lem:no-ent-dist}
Let $\mathcal{S},\mathcal{T}$ be two measure-and-prepare channels with effects 
$\{S_i\}_{i=1}^n$ and $\{T_i\}_{i=1}^n$ respectively. It holds 
that the probability $p$ of their correct discrimination, without the usage of 
entangled states, is upper bounded by the value 
\begin{equation}
\begin{split}
p \leq 
&\frac{1}{2}+\frac{1}{4}\max_\rho 
\| \diag \left[ (\mathcal{S}-\mathcal{T}) (\rho) \right] \|_1 \\
&=\frac{1}{2}+\frac{1}{2}\underset{\Delta \subseteq \{ 1,\ldots ,d \}}{\max}
\left\| \sum_{i \in \Delta} (S_i - T_i) \right\|.
\end{split}
\end{equation}
\end{proposition}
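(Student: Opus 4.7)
The plan is to reduce the problem to the classical task of distinguishing two probability distributions and then rewrite the resulting expression using the variational characterisation of the total variation distance.

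First I would fix an input state $\rho \in \Omega_d$ and observe that, without the help of ancilla or entanglement, all the experimenter can access is the classical outcome label $i$, distributed according to $p^S_i = \Tr(\rho S_i)$ if the device implements $\mathcal{S}$ and $p^T_i = \Tr(\rho T_i)$ if it implements $\mathcal{T}$. Thus the discrimination problem collapses to single-sample hypothesis testing between two probability vectors $p^S, p^T \in \R^n$, for which the classical Helstrom bound yields an optimal success probability $\tfrac12+\tfrac14 \|p^S-p^T\|_1$. Noting that $p^S - p^T$ is exactly the diagonal of $(\mathcal{S}-\mathcal{T})(\rho)$, and maximising over the input state, gives the first form of the bound.

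Next I would pass to the second form by invoking the variational characterisation of the $\ell_1$ distance stated in the excerpt, namely
\begin{equation}
\|p^S(\rho)-p^T(\rho)\|_1 = 2 \max_{\Delta \subseteq \{1,\dots,n\}} \Tr\!\left(\rho \sum_{i\in\Delta}(S_i-T_i)\right).
\end{equation}
Swapping the two maxima, the inner maximisation over $\rho \in \Omega_d$ of $\Tr(\rho A_\Delta)$ with $A_\Delta = \sum_{i\in\Delta}(S_i-T_i)$ equals $\lambda_{\max}(A_\Delta)$, since $A_\Delta$ is Hermitian. The final step exploits the fact that $\sum_i S_i = \sum_i T_i = \1$, which forces $A_{\Delta^c} = -A_\Delta$; hence as $\Delta$ ranges over all subsets, the quantities $\lambda_{\max}(A_\Delta)$ and $-\lambda_{\min}(A_\Delta)=\lambda_{\max}(A_{\Delta^c})$ both appear, and their joint maximum coincides with the operator norm $\|A_\Delta\|$.

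The only mildly delicate step I foresee is this last identification of $\max_\Delta \lambda_{\max}(A_\Delta)$ with $\max_\Delta \|A_\Delta\|$; it relies crucially on the closure of the family $\{A_\Delta\}_\Delta$ under the involution $\Delta \mapsto \Delta^c$. Everything else is bookkeeping: the classical Helstrom inequality and the variational formula for the total variation distance recalled in the preceding section do the real work.
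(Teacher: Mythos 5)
Your proposal is correct and follows essentially the same route as the paper: reduce to distinguishing the output probability vectors, apply the variational formula for the total variation distance, identify the inner maximum over states with the largest eigenvalue of $\sum_{i\in\Delta}(S_i-T_i)$, and use the closure of the family under $\Delta\mapsto\Delta^c$ (a consequence of $\sum_i S_i=\sum_i T_i=\1$) to replace the largest eigenvalue by the operator norm. If anything, you are slightly more explicit than the paper about that last complement step, which the paper's chain of equalities leaves implicit.
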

\begin{proof}
We can note that
\begin{equation}\label{eqn:no-ent-dist}
\begin{split}
&\max_{\rho} \| \diag \left[ (\mathcal{S}-\mathcal{T}) (\rho) \right] \|_1 \\ &=
\underset{\rho}{\max} \sum_i 
\left\vert 
\textrm{Tr} 
\left( \rho ( S_i- T_i ) \right) 
\right\vert 
=
\underset{\psi}{\max} \sum_i 
\left\vert 
\bra{\psi}
\left( S_i- T_i \right) \ket{\psi}
\right\vert \\
&=2 \underset{\Delta \subseteq \{ 1,\ldots ,d \}}{\max} 
\underset{\ket{\psi}}{\max}  
\bra{\psi} \left( \sum_{i \in \Delta} \left(  S_i- T_i  \right) 
\right) 
\ket{\psi} \\
&=2 \underset{\Delta \subseteq \{ 1,\ldots ,d \}}{\max} \left\| 
\sum_{i \in \Delta} ( S_i- T_i) \right\|.
\end{split}
\end{equation}
\end{proof}

In the case of projective measurements $\PP_V$ and $\PP_U$, without loss of 
generality, we assume that one measurement can be performed in the 
computational basis, 
i.e. $V=\1$. We have the following fact
\begin{corollary}\label{lem:proj-no-ent-dist}
Let $\PP_{\1}$ and $\PP_U$ be two projective measurements such that $U \in
\UU_d$ for arbitrary $d$. Then the bound from Proposition~\ref{lem:no-ent-dist} 
reads 
\begin{equation}
\begin{split}
p\leq 
&\frac{1}{2}+\frac{1}{2}\underset{\Delta \subseteq \{ 1,\ldots ,d \}}{\max}
\left\| \sum_{i \in \Delta} (\proj{i} - \proj{u_i}) \right\|
\\
&=
\frac{1}{2}+\frac12\sqrt{1-\min_{\Delta \subseteq \{ 1,\ldots ,d \} 
}\sigma_{\min}^2(U_{\Delta})},
\end{split}
\end{equation}
where $\sigma_{\min}$ denotes the minimal singular value and $U_\Delta = 
\{U_{ij}\}_{ij \in \Delta}$ is a principal submatrix of matrix $U$, with rows 
and columns taken from the subset $\Delta$. 
\end{corollary}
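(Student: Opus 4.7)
The plan is to reduce the corollary to a computation of the operator norm of a difference of two equal-rank orthogonal projectors. First I would apply Proposition~\ref{lem:no-ent-dist} with $S_i = \proj{i}$ and $T_i = \proj{u_i}$: this yields the first equality of the statement immediately, and the whole task collapses to evaluating $\|P_\Delta - Q_\Delta\|$, where $P_\Delta = \sum_{i \in \Delta}\proj{i}$ and $Q_\Delta = \sum_{i \in \Delta}\proj{u_i}$, for each $\Delta \subseteq \{1,\ldots,d\}$.

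Next I would arrange the setup so that the principal submatrix $U_\Delta$ appears naturally. Let $V$ denote the $d \times |\Delta|$ isometry whose columns are $\{\ket{i}\}_{i \in \Delta}$, so that $P_\Delta = VV^\dagger$. Since $U$ is unitary, the columns of $UV$ form an orthonormal basis of the range of $Q_\Delta$, so $Q_\Delta = UV(UV)^\dagger$, and both $P_\Delta$ and $Q_\Delta$ are orthogonal projectors of rank $|\Delta|$. The overlap matrix of the two bases then reads
\[ V^\dagger(UV) = V^\dagger U V = U_\Delta, \]
which is precisely the principal submatrix appearing in the statement.

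The main step, which I expect to require the most care, is to identify $\|P_\Delta - Q_\Delta\|$ with $\sqrt{1 - \sigma_{\min}^2(U_\Delta)}$. For a unit vector of the form $\ket{\phi} = V\ket{\tilde\phi}$ lying in the range of $P_\Delta$, a direct computation gives
\[ \|(P_\Delta - Q_\Delta)\ket{\phi}\|^2 = 1 - \bra{\phi}Q_\Delta\ket{\phi} = 1 - \bra{\tilde\phi} U_\Delta U_\Delta^\dagger \ket{\tilde\phi}, \]
whose maximum over unit $\ket{\tilde\phi}$ equals $1 - \sigma_{\min}^2(U_\Delta)$. Because $P_\Delta$ and $Q_\Delta$ have equal rank, the CS decomposition pairs the nonzero eigenvalues of $P_\Delta - Q_\Delta$ as $\pm \sin\theta_k$ with $\cos\theta_k$ being the singular values of $U_\Delta$; in particular its spectrum is symmetric about zero and the spectral norm is actually attained on $\mathrm{range}(P_\Delta)$, so the restricted maximum is the global one. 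Substituting the resulting expression for $\|P_\Delta - Q_\Delta\|$ into the bound of Proposition~\ref{lem:no-ent-dist} and maximising over $\Delta$ produces the claimed equality.
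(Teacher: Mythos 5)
Your proposal is correct and follows essentially the same route as the paper: the paper's proof is a one-line appeal to Proposition~\ref{lem:no-ent-dist} together with the cited result of Wedin on singular values of differences of projectors, which is exactly the fact you derive explicitly via the overlap matrix $V^\dagger U V = U_\Delta$ and the CS decomposition. The only difference is that you supply a self-contained derivation of $\|P_\Delta - Q_\Delta\| = \sqrt{1-\sigma_{\min}^2(U_\Delta)}$ where the paper merely cites it.
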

\begin{proof}
Proof follows from Proposition~\ref{lem:no-ent-dist} and the result concerning
singular values of the difference of projectors~\cite{wedin1983angles}.
\end{proof}

\begin{remark}\label{rem:rank-deficient}
From the above Corollary we see that $\PP_{\1}$ and $\PP_U$ are perfectly 
distinguishable without entanglement if and only if there exists a 
rank-deficient principal submatrix of matrix $U$.
\end{remark}

\begin{remark}[Optimal strategy for discrimination of measurements 
without entanglement]\label{rem:strategy-classical}
The optimal input state is the normalized leading eigenvector ($ev_1(\cdot)$) 
of the matrix $\left| \sum_{i \in \Delta} \left(  S_i- T_i  \right) \right|$, 
i.e.
\begin{equation}
\ket{\psi_{\mathrm{opt}}} 
= ev_1 \left(
\left| \sum_{i \in \Delta} \left(  S_i- T_i  \right) \right|
\right)
\end{equation}
for a subset $\Delta$ which maximizes eq.~\eqref{eqn:no-ent-dist}.
In the case of projective measurements it reads
\begin{equation}
\ket{\psi_{\mathrm{opt}}} =
 ev_1 \left(
\left| \sum_{i \in \Delta} (\proj{i} - \proj{u_i}) \right|
\right).
\end{equation}
\end{remark}

\subsection{Discrimination of unitary channels}

Before we proceed to presenting our main results, we need to briefly discuss the
problem of discrimination of unitary channels. This can be done without the
usage of entangled input. In order to formulate the condition for perfect
discrimination of unitary channels we introduce the notion of numerical range of
a matrix $A \in M_d$, denoted by $W(A) =\{\bra{x}A\ket{x}: \ket{x} \in \C^d, \;
\;\braket{x}{x}=1\}$. The celebrated Hausdorf-T\"oplitz
theorem~\cite{hausdorff1919wertvorrat,toeplitz1918algebraische} states that
$W(A)$ is a convex set and therefore $W(A) =\{\tr A \sigma : \sigma \in \Omega_d
\}$. Let us now recall the well-known~\cite{watrous} result for the
distinguishability of unitary channels.
\begin{proposition}\label{th:unit_channel}
	Let $U \in \UU_d$ and $\Phi_U: \rho \mapsto U \rho U^\dagger$ be a unitary 
	channel. 
	Then 
	\begin{equation}
	\| \Phi_U  - \Phi_{\1} \|_\diamond = 2 \sqrt{1-\nu^2},
	\end{equation}
	where $\nu = \min \left\{|x| : x \in W(U^\dagger) \right\}$. 
\end{proposition}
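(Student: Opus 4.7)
The plan is to exploit the fact that $\Phi_U - \Phi_{\1}$ is Hermiticity-preserving (being a difference of completely positive maps), so by the remark immediately following the definition of the diamond norm we may compute $\|\Phi_U - \Phi_{\1}\|_\diamond$ by maximizing over rank-one projectors $\proj{\psi}$ on $\C^d \otimes \C^d$; by the Schmidt decomposition, an ancilla of dimension $d$ is enough. The key observation is that $(\Phi_U \otimes \1)(\proj{\psi}) = \proj{\phi}$ with $\ket{\phi} := (U \otimes \1) \ket{\psi}$, so that $(\Phi_U - \Phi_{\1}) \otimes \1$ sends $\proj{\psi}$ to $\proj{\phi} - \proj{\psi}$, a difference of two pure states whose trace norm has a classical closed form.

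First, I would apply the identity $\|\proj{\phi} - \proj{\psi}\|_1 = 2\sqrt{1 - |\braket{\phi}{\psi}|^2}$ valid for any pair of unit vectors, which yields
\begin{equation}
\|\Phi_U - \Phi_{\1}\|_\diamond = 2\sqrt{1 - \min_{\ket{\psi}} |\bra{\psi} (U \otimes \1) \ket{\psi}|^2},
\end{equation}
where the minimum is over unit vectors in $\C^d \otimes \C^d$. Note the swap between $\max$ and $\min$ is legitimate since $t \mapsto 2\sqrt{1-t^2}$ is decreasing on $[0,1]$.

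Second, I would parametrize pure bipartite states via vectorization: writing $\ket{\psi} = \ketV{M}$ with $\tr(M^\dagger M) = 1$, the identity from the mathematical framework gives $(U \otimes \1)\ketV{M} = \ketV{UM}$, and hence
\begin{equation}
\bra{\psi}(U \otimes \1)\ket{\psi} = \braketV{M}{UM} = \tr(M^\dagger U M) = \tr(U \sigma),
\end{equation}
where $\sigma := M M^\dagger \in \Omega_d$. Every density matrix on the first factor arises this way (take $M = \sqrt{\sigma}$), so the minimization reduces to $\min_{\sigma \in \Omega_d} |\tr(U \sigma)|$.

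Finally, I would invoke the Hausdorff--T\"oplitz characterization already recorded above, $W(U) = \{\tr(U \sigma) : \sigma \in \Omega_d\}$, to conclude that this last quantity equals $\min\{|x| : x \in W(U)\}$. Since $W(U^\dagger) = \{\overline{x} : x \in W(U)\}$, the minimum modulus is the same on $W(U)$ and on $W(U^\dagger)$, so it equals $\nu$, giving $\|\Phi_U - \Phi_{\1}\|_\diamond = 2\sqrt{1 - \nu^2}$. The main obstacle here is essentially bookkeeping rather than a genuine difficulty: one has to carefully justify that the ancilla of dimension $d$ suffices and that the bijection $\sigma \leftrightarrow M$ delivers \emph{every} state in $\Omega_d$ as a reduced state of some pure $\ket{\psi}$, so that no extremal input is missed in passing from the diamond norm to the numerical range.
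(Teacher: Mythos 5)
Your proof is correct, and it is essentially the standard argument: the paper does not prove Proposition~\ref{th:unit_channel} itself but cites it as a well-known result from Watrous, whose proof proceeds exactly as you do --- restriction to rank-one inputs for a Hermiticity-preserving map, the identity $\|\proj{\phi}-\proj{\psi}\|_1 = 2\sqrt{1-|\braket{\phi}{\psi}|^2}$, vectorization to reduce $\bra{\psi}(U\otimes\1)\ket{\psi}$ to $\tr(U\sigma)$, and the Hausdorff--T\"oplitz identification of $\{\tr(U\sigma):\sigma\in\Omega_d\}$ with $W(U)$. All the steps you flag as bookkeeping (ancilla of dimension $d$ via Schmidt decomposition, surjectivity of $M\mapsto MM^\dagger$ onto $\Omega_d$) are handled correctly.
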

From the above proposition it follows that unitary channels $\Phi_U, \Phi_{\1}$ 
are
perfectly distinguishable if and only if $0 \in W(U^\dagger)$. The above can
also be formulated as: there exists a density matrix $\sigma$, such that $\tr
U^\dagger \sigma = 0$.

\section{Entanglement assisted
discrimination}\label{sec:entanglement-assisted-discrimination}
A more sophisticated idea for discriminating quantum measurements requires the
use of an entangled state. We put one part of the state into the measurement
device and later use the other part to improve the probability of correct
discrimination. Our goal is to show how the discrimination of projective
measurements is connected with the problem of discrimination of unitary
channels. Finally, we would like to state the analytical form of the optimal
discriminator $\rho$.

The following theorem gives us a simple condition that lets us decide whether
$\PP_U$ and $\PP_\1$ are perfectly distinguishable. This condition is one of the
main results of our work and 
its proof is postponed to Appendix~\ref{sec:proof-diamond-equivalence}. This due
to the fact, that the proof requires a quite large framework of supporting
lemmas.

\begin{theorem}\label{th:ProjPOVM=minUnitary}
Let $U,V\in \UU_d$ and let $\PP_U$ and $\PP_V$ be two projective measurements. 
Let also 
$\diaguni_d$ be the set of diagonal unitary matrices of dimension $d$. Then
\begin{equation}
\|\PP_U - \PP_V\|_\diamond = \min_{E \in \diaguni_d} \|\Phi_{UE} - 
\Phi_V\|_\diamond,
\end{equation}
where $\Phi_U$ is unitary channel.
\end{theorem}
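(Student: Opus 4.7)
The plan is to establish the two inequalities separately after reducing to the case $V=\1$. Using the unitary invariance of the diamond norm, together with the identities $\PP_W \circ \Phi_V = \PP_{V^\dagger W}$ (straightforward from $\bra{w_k} V = \bra{V^\dagger w_k}$) and $\Phi_A \circ \Phi_B = \Phi_{AB}$, one obtains
$\|\PP_U - \PP_V\|_\diamond = \|\PP_{V^\dagger U} - \PP_\1\|_\diamond$
and
$\min_{E \in \diaguni_d} \|\Phi_{UE} - \Phi_V\|_\diamond = \min_{E \in \diaguni_d} \|\Phi_{V^\dagger UE} - \Phi_\1\|_\diamond$
(the first by right composition with $\Phi_V$, the second by left composition with $\Phi_{V^\dagger}$). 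Hence in what follows I assume $V = \1$.

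\emph{Upper bound.} Two elementary observations are used: (i) every rank-one projective measurement factors as $\PP_W = \Delta \circ \Phi_{W^\dagger}$, where $\Delta(X) = \sum_k \proj{k} X \proj{k}$ is the completely dephasing channel; and (ii) the $k$-th column of $UE$ equals $E_{kk}\ket{u_k}$, and the unit phase $E_{kk}$ cancels in the projector, so $\PP_{UE} = \PP_U$ for every $E \in \diaguni_d$. Combining these,
\begin{equation*}
\PP_U - \PP_\1 \;=\; \PP_{UE} - \PP_\1 \;=\; \Delta \circ \bigl(\Phi_{(UE)^\dagger} - \Phi_\1\bigr).
\end{equation*}
Since $\Delta$ is a quantum channel, composing with it cannot increase the diamond norm, so for every $E$
\begin{equation*}
\|\PP_U - \PP_\1\|_\diamond \;\leq\; \|\Phi_{(UE)^\dagger} - \Phi_\1\|_\diamond \;=\; \|\Phi_{UE} - \Phi_\1\|_\diamond;
\end{equation*}
the last equality follows from Proposition~\ref{th:unit_channel} together with $W(A^\dagger) = \overline{W(A)}$, which leaves the quantity $\nu$ unchanged. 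Minimizing over $E$ gives $\|\PP_U - \PP_\1\|_\diamond \leq \min_E \|\Phi_{UE} - \Phi_\1\|_\diamond$.

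\emph{Lower bound.} For the reverse inequality I would compute both sides as explicit extremal problems and match them. On the measurement side, the Choi matrix $J(\PP_U - \PP_\1) = \sum_k \proj{k} \otimes \bigl((\proj{u_k})^\top - \proj{k}\bigr)$ is block diagonal, so the identity $\|\Phi\|_\diamond = \max_{\rho \in \Omega_d} \|(\1 \otimes \sqrt{\rho}) J(\Phi) (\1 \otimes \sqrt{\rho})\|_1$ splits the trace norm into $d$ independent rank-$\leq 2$ contributions, each evaluable in closed form through $\|\ketbra{a}{a} - \ketbra{b}{b}\|_1 = \sqrt{(\|a\|^2 + \|b\|^2)^2 - 4|\langle a|b\rangle|^2}$. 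On the unitary side, Proposition~\ref{th:unit_channel} gives
\begin{equation*}
\min_{E \in \diaguni_d} \|\Phi_{UE} - \Phi_\1\|_\diamond \;=\; 2\sqrt{1 - \max_{E}\min_{\sigma \in \Omega_d} |\tr(E^\dagger U^\dagger \sigma)|^2}.
\end{equation*}

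\emph{Main obstacle.} The delicate step is to show that these two extremal values agree. My planned route is: (a) invoke a minimax argument, exploiting convexity of $\Omega_d$ and compactness of $\diaguni_d$, to swap $\max_E$ with $\min_\sigma$ on the unitary side; once swapped, the inner maximum over diagonal unitaries collapses to $\|\diag(U^\dagger \sigma)\|_1$, attained by the phase alignment $E^*_{kk} = (U^\dagger \sigma^*)_{kk}/|(U^\dagger \sigma^*)_{kk}|$ where $\sigma^*$ is an outer minimizer; (b) massage the sum of rank-$2$ trace norms on the measurement side into a dual form also involving the functional $\sigma \mapsto \|\diag(U^\dagger \sigma)\|_1$, thereby showing both canonical forms coincide. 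Step (b), together with the bookkeeping required to justify the minimax exchange and to handle the rank deficiencies of $U^\dagger \sigma$ that arise at the optimizer, is where the real work lives and is precisely why the authors relegate the argument to an appendix backed by a sequence of supporting lemmas.
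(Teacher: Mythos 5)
Your reduction to $V=\1$ and your upper bound are both correct, and the upper bound is arguably cleaner than the paper's: writing $\PP_{UE}=\Delta\circ\Phi_{(UE)^\dagger}$ and invoking monotonicity of the diamond norm under post-composition with the dephasing channel replaces the explicit block-trace-norm manipulation in the paper's Lemma~\ref{lem: inequality}, which reaches the same inequality by bounding the block-diagonal Choi matrix's trace norm by that of the full unitary difference. The minimax exchange you propose in step (a) is also the paper's route (its Lemma~\ref{th: saddle}, via Fan's minimax theorem, with some care needed because $\diaguni_d$ is not convex --- the paper relaxes to diagonal contractions and then argues the optimum is attained at a unitary point; you should not wave at ``compactness of $\diaguni_d$'' as if that sufficed).

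The genuine gap is the lower bound, which is the substance of the theorem, and your step (b) is a statement of the difficulty rather than a resolution of it. There is no direct ``dual form'' that converts $\max_\rho\sum_i\sqrt{(\bra{i}\rho\ket{i}+\bra{u_i}\rho\ket{u_i})^2-4|\bra{i}\rho\ket{u_i}|^2}$ into $2\sqrt{1-(\min_\sigma\|\diag(U^\dagger\sigma)\|_1)^2}$; the two extremal problems have different optimizers and the sum of square roots does not termwise match the taxicab norm. What actually closes the argument in the paper is a structure theorem for the saddle point (Lemma~\ref{th: state}): at the optimal $E_0$, letting $\lambda_1,\lambda_d$ be the extremal eigenvalues of $UE_0$ with eigenprojectors $P_1,P_d$, there exist states $\rho_1=P_1\rho_1P_1$ and $\rho_d=P_d\rho_dP_d$ with $\diag(\rho_1)=\diag(\rho_d)$. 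Feeding the explicit witness $\tau=\tfrac12(\rho_1+\rho_d)$ into the block formula makes each term collapse so that the sum evaluates exactly to $2\sqrt{1-|(\lambda_1+\lambda_d)/2|^2}$, matching the unitary-channel value; without producing such a witness (or an equivalent device), the inequality $\|\PP_U-\PP_\1\|_\diamond\geq\min_E\|\Phi_{UE}-\Phi_\1\|_\diamond$ remains unproved. You also leave untreated the degenerate case $\min_E\|\Phi_{UE}-\Phi_\1\|_\diamond=2$, which the paper handles separately through Corollary~\ref{th:diamond-equivalence} and Proposition~\ref{th:diamond-2} since the saddle-point analysis there assumes $0\notin W(UE_0)$.
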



Theorem \ref{th:ProjPOVM=minUnitary} gives us a potentially easy method to
calculate the diamond norm. A simple observation is that if we build projections
$U\proj{i}U^\dagger$ from unitary matrix $U$, then the same projections will be
built from matrix $UE$, where $E \in \diaguni_d$. It means that matrices $UE$ 
form an
equivalence class of matrix $U$. The interesting thing is that a
``properly-rotated'' matrix gives us an easy way of calculating the value of the
diamond norm $\|\PP_U - \PP_\1\|_\diamond$ - it is enough to utilize
Proposition~\ref{th:unit_channel}.
Since all unitary  channels of the form $\Phi_{UE}$ are coherifications of 
channel $\PP_U$~\cite{korzekwa2018coherifying}, the above theorem gives us that 
the value of completely bounded trace norm is the minimal value of the norm on 
the  difference between coherified channels.

\begin{figure}[h]
\centering 
\includegraphics[width=0.8\linewidth]{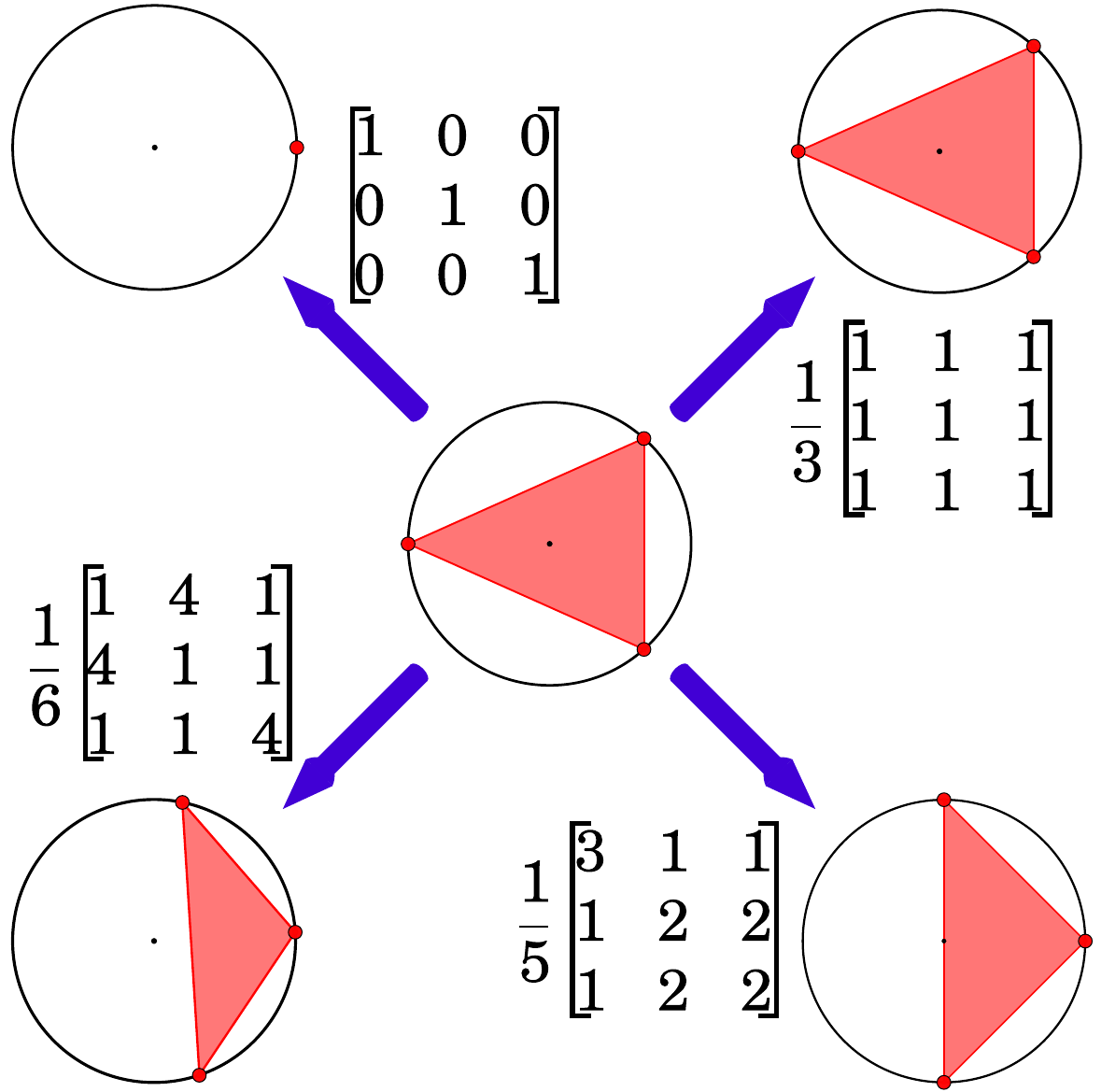} 

\caption{Dependence of the behavior of the numerical range of a matrix $UE \in
\UU_3$ on the eigenvectors of $U$. We start with a matrix $U$ with fixed
eigenvalues and assign each of them distinct eigenvectors. The matrices above 
the
arrows are the unistochastic matrices corresponding to these eigenvector
matrices. The red shaded area is the numerical range of the matrix $UE$ for
which $\min_{E \in \diaguni_d} \|\Phi_{UE} - \Phi_V\|_\diamond$ is achieved.}
\label{fig:range}
\end{figure}

The case of perfect distiguishability can be formulated, by the use of 
Theorem~\ref{th:ProjPOVM=minUnitary}, as a corollary which proof is postponed 
to Appendix~\ref{sec:proof-corollary-diamond-equivalence}.

\begin{corollary}\label{th:diamond-equivalence}
Let $U\in \UU_d$. Then $\PP_U$ and $\PP_{\1}$ are perfectly distinguishable if
and only if for all $E \in \diaguni_d$,  unitary channels $\Phi_{UE}$ are
perfectly distinguishable from the identity channel $\Phi_{\1}$.
\end{corollary}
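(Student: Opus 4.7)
The plan is to deduce the corollary directly from Theorem~\ref{th:ProjPOVM=minUnitary} by rephrasing ``perfect distinguishability'' as attainment of the maximum possible value of the diamond norm; no additional machinery should be needed.

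First I would recall that, by Helstrom's bound $p \leq \tfrac{1}{2}+\tfrac{1}{4}\|\Phi-\Psi\|_\diamond$ (which is saturable) together with the fact that the diamond norm of a difference of quantum channels is always at most $2$ (by the triangle inequality and $\|\Phi\|_\diamond=\|\Psi\|_\diamond=1$ for CPTP maps), two channels $\Phi,\Psi$ are perfectly distinguishable if and only if $\|\Phi-\Psi\|_\diamond=2$. Applying this criterion to both sides, the corollary is equivalent to the assertion
$$
\|\PP_U - \PP_\1\|_\diamond = 2 \;\Longleftrightarrow\; \|\Phi_{UE} - \Phi_\1\|_\diamond = 2 \text{ for every } E \in \diaguni_d.
$$

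Next I would invoke Theorem~\ref{th:ProjPOVM=minUnitary}, which gives
$\|\PP_U - \PP_\1\|_\diamond = \min_{E \in \diaguni_d}\|\Phi_{UE} - \Phi_\1\|_\diamond$.
Since each term in this minimum is bounded above by $2$, the minimum equals $2$ if and only if every term equals $2$. That is exactly the equivalence displayed above, closing the argument.

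There is no real obstacle: assuming Theorem~\ref{th:ProjPOVM=minUnitary}, the corollary reduces to the elementary observation that a minimum of a family of quantities bounded above by a common constant saturates that constant iff every element of the family does. The only minor technical point is the uniform upper bound $\|\Phi-\Psi\|_\diamond \leq 2$, which I would justify in one line. If a more explicit form were desired, one could further translate ``$\Phi_{UE}$ is perfectly distinguishable from $\Phi_\1$'' via Proposition~\ref{th:unit_channel} into the condition $0 \in W((UE)^\dagger)$ for every $E \in \diaguni_d$, but this reformulation is not needed for the corollary itself.
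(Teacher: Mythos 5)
Your argument is internally coherent: perfect distinguishability of two channels is indeed equivalent to their diamond-norm distance attaining the value $2$, and a minimum of quantities all bounded above by $2$ equals $2$ exactly when every term does. The problem is that your proof is circular relative to the way Theorem~\ref{th:ProjPOVM=minUnitary} is established in this paper. The appendix states explicitly that the proof of that theorem only treats the case $\|\PP_U-\PP_V\|_\diamond<2$ and that the equality case ``is covered by Corollary~\ref{th:diamond-equivalence}''; the first line of the theorem's proof reads that when $\min_{E}\|\Phi_{UE}-\Phi_\1\|_\diamond=2$, one concludes $\|\PP_U-\PP_\1\|_\diamond=2$ \emph{according to Corollary~\ref{th:diamond-equivalence}}. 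Moreover the saddle-point Lemma~\ref{th: saddle}, which carries the $<2$ case of the theorem, also invokes Corollary~\ref{th:diamond-equivalence} inside its own proof. Deducing the corollary from the theorem therefore makes the corollary depend on itself; within this paper's logical architecture your derivation cannot stand.

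The paper instead proves the corollary independently. The forward direction uses Proposition~\ref{th:diamond-2}: perfect distinguishability yields a state $\rho$ with $\diag(U^\dagger\rho)=0$, hence $\tr(E^\dagger U^\dagger\rho)=0$ for every $E\in\diaguni_d$, i.e.\ $0\in W((UE)^\dagger)$, which by Proposition~\ref{th:unit_channel} is perfect distinguishability of $\Phi_{UE}$ from $\Phi_\1$. The converse is the substantive part: from $0\in W((UE)^\dagger)$ for all unitary diagonal $E$ one must pass to $0\in W\left(UD+D^\dagger U^\dagger\right)$ for every (not necessarily unitary) diagonal $D$ --- done by factoring $UD=UED_+$ and rescaling a probability vector in the eigenbasis of $UE$ --- and then apply Lemma~\ref{th:diamond2<->numerical-range}. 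If you want to keep your streamlined argument, you would first have to supply a proof of Theorem~\ref{th:ProjPOVM=minUnitary} that does not pass through this corollary; as the paper stands, no such proof is available.
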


The above condition together with Proposition~\ref{th:unit_channel} gives us 
that prefect distinguishability is equivalent to the fact that
$\forall_{E\in \diaguni_d} \exists_{\rho}: \tr E^\dagger U^\dagger \rho = 0$. 
In fact, the above is equivalent to
$\exists_{\rho} \forall_{E\in \diaguni_d} : \tr E^\dagger U^\dagger \rho = 0$,
which at first glance seems to be much stronger.
Of course, the latter statement can be rewritten as 
$\exists_{\rho} :  \diag(U^\dagger \rho)= 0$. 
We state this algebraic condition for perfect distinguishability 
in the next theorem, which proof is postponed to
Appendix~\ref{sec:proof-diamond-2}.

\begin{proposition}\label{th:diamond-2}
Let $U\in \UU_d$. Then $\PP_{U}$ and $\PP_{\1}$ are perfectly distinguishable if
and only if there exists $\rho \in \Omega_d$ such that
\begin{equation}\label{eq:diag}
\diag (U^\dagger \rho)= 0 .
\end{equation}
\end{proposition}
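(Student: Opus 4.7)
My plan is to sidestep the nontrivial quantifier swap ($\forall E\,\exists\rho_E\iff \exists\rho\,\forall E$) suggested in the discussion preceding the proposition, and instead prove it directly by combining Helstrom's theorem with the sqrt-formula for the diamond norm, eq.~\eqref{eqn:diamond-sqrt}. The idea is that perfect distinguishability is equivalent to saturating the trace-norm triangle inequality block by block on the Choi operator, which pins down an algebraic condition on the optimal discriminator state.

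First I would translate perfect distinguishability into $\|\PP_U-\PP_\1\|_\diamond=2$ (Helstrom) and use \eqref{eqn:diamond-sqrt} together with the Choi form $J(\PP_V)=\sum_k\proj k\otimes\proj{\bar v_k}$. This turns the question into the existence of $\rho\in\Omega_d$ for which the block-diagonal operator $\sum_k\proj k\otimes\bigl(\sqrt\rho\proj{\bar u_k}\sqrt\rho-\sqrt\rho\proj k\sqrt\rho\bigr)$ has trace norm exactly $2$.

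The heart of the argument is the triangle inequality $\|X-Y\|_1\le \|X\|_1+\|Y\|_1$ for positive $X,Y$, which is saturated iff they have orthogonal supports. Summing the rank-one block trace norms and using $\sum_k\proj{\bar u_k}=\1$ shows that the overall trace norm is at most $2$, saturated iff $\bra{\bar u_k}\rho\ket k=0$ for every $k$. Unpacking, this is precisely $\diag(U^T\rho)=0$.

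The final step is a bookkeeping identity: Hermiticity of $\rho$ yields $\diag(U^T\rho)=\overline{\diag(U^\dagger\bar\rho)}$, and since complex conjugation $\rho\mapsto\bar\rho$ is a bijection on $\Omega_d$, the existence of a $\rho$ satisfying $\diag(U^T\rho)=0$ is equivalent to the existence of one satisfying $\diag(U^\dagger\rho)=0$, which is the form stated in the proposition. The main obstacle is to keep the conjugate/transpose conventions straight, since $J(\PP_V)$ involves $\proj{\bar v_k}$ (by the transposed-effects convention of Sec.~\ref{sec:mathematical-framework}), and it is at this bookkeeping stage that the difference between $U^T$ and $U^\dagger$ must be carefully resolved.
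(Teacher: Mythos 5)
Your proposal is correct and follows essentially the same route as the paper's proof: both reduce perfect distinguishability to $\|\PP_U-\PP_\1\|_\diamond=2$ via Helstrom, apply the sandwiched formula \eqref{eqn:diamond-sqrt} to the block-diagonal Choi operator, and characterize saturation block by block --- the paper via the explicit rank-two singular-value expression in eq.~\eqref{eq:igui}, you via the orthogonal-supports criterion for $\|X-Y\|_1=\|X\|_1+\|Y\|_1$ with $X,Y\geq 0$, which yields the identical per-block condition $\bra{i}\rho\ket{u_i}=0$. Your extra bookkeeping with the transposed-effects convention (arriving at $\diag(U^\top\rho)=0$ and then passing to $\bar\rho$) is sound and in fact more careful than the paper, which silently suppresses the conjugation.
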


We would like to perfectly discriminate the measurements with the lowest
possible amount of entanglement. This translates into the lowest possible rank
of $\rho$. This is shown in the following proposition.
\begin{proposition}
Let $\Phi$ be Hermiticity-preserving and $\rho \in \Omega_{d_1d_2}$ be a 
discriminator of $\Phi$ 
such 
that $\mathrm{rank}(\rho) = k$. Then it is possible to obtain the value of the 
diamond norm on a channel extended by a $k$-dimensional identity channel. If 
the state $\rho$ is rank-one,
then the optimal discrimination can be performed without the use of
entanglement.
\end{proposition}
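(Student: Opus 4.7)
The plan is to start from the alternative formula~(\ref{eqn:diamond-sqrt}) for the diamond norm (reading the discriminator as $\rho\in\Omega_{d_1}$, consistent with that equation), and to reinterpret the quantity $(\1\otimes\sqrt{\rho})J(\Phi)(\1\otimes\sqrt{\rho})$ appearing there as the image under $\Phi\otimes\1$ of a pure state whose Schmidt rank equals $\mathrm{rank}(\rho)$. Compressing the ancillary factor down to that Schmidt rank will then give the diamond norm using only a $k$-dimensional identity extension, and the rank-one case will drop out immediately.

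Concretely, I would use $J(\Phi)=(\Phi\otimes\1)(\proj{\Omega})$ with $\ket{\Omega}=\sum_i\ket{i}\ket{i}$, and the fact that $\Phi\otimes\1$ acts trivially on the second tensor factor, to rewrite
\begin{equation*}
(\1\otimes\sqrt{\rho})J(\Phi)(\1\otimes\sqrt{\rho}) = (\Phi\otimes\1)\bigl(\proj{\psi'}\bigr),
\end{equation*}
with $\ket{\psi'}:=(\1\otimes\sqrt{\rho})\ket{\Omega}\in\C^{d_1}\otimes\C^{d_1}$. Using the spectral decomposition $\rho=\sum_{\alpha=1}^{k}\lambda_\alpha\proj{v_\alpha}$ one obtains the Schmidt form $\ket{\psi'}=\sum_{\alpha=1}^{k}\sqrt{\lambda_\alpha}\ket{v_\alpha^*}\otimes\ket{v_\alpha}$, whose Schmidt rank is $k$. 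Let $V:\C^k\to\C^{d_1}$ be the isometry with $V\ket{\alpha}=\ket{v_\alpha}$ and set $\ket{\tilde\psi}:=\sum_\alpha\sqrt{\lambda_\alpha}\ket{v_\alpha^*}\otimes\ket{\alpha}\in\C^{d_1}\otimes\C^{k}$, so that $\ket{\psi'}=(\1\otimes V)\ket{\tilde\psi}$. Because $\1\otimes V$ is an isometry, conjugation by it preserves the trace norm, giving $\|\Phi\|_\diamond=\|(\Phi\otimes\1_k)(\proj{\tilde\psi})\|_1$, which is the diamond norm realised on the channel extended by a $k$-dimensional identity channel — the first claim of the proposition.

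For the second part take $k=1$: then $\C^k\cong\C$ is trivial, so $\ket{\tilde\psi}$ is (up to a global phase) a pure state in $\C^{d_1}$ and $(\Phi\otimes\1_1)(\proj{\tilde\psi})=\Phi(\proj{\tilde\psi})$; the optimal strategy simply feeds one unentangled pure state into $\Phi$, so no entanglement is needed. The only real technical point is the Schmidt-rank identification together with the isometric compression — both are standard — but one has to keep careful track of which of the two $d_1$-dimensional tensor factors in $J(\Phi)$ carries $\sqrt{\rho}$ and which is acted on by $\Phi$; a secondary minor point is that the statement writes $\rho\in\Omega_{d_1 d_2}$, which I am reading as a typo for $\rho\in\Omega_{d_1}$ in line with~(\ref{eqn:diamond-sqrt}).
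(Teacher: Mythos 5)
Your proposal is correct and follows essentially the same route as the paper's own proof: both identify $(\1\otimes\sqrt{\rho})J(\Phi)(\1\otimes\sqrt{\rho})$ with $(\Phi\otimes\1)$ applied to the projector onto the (vectorized) purification of $\rho$, take its Schmidt decomposition of rank $k$, and compress the ancilla isometrically onto the $k$-dimensional Schmidt support without changing the trace norm. Your reading of $\rho\in\Omega_{d_1}$ also matches what the paper's proof actually uses, so there is nothing to add.
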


\begin{proof}
Let us take the Schmidt decomposition of $| \sqrt{\rho}^{\top} \rangle\rangle 
$, that is 
 $| \sqrt{\rho}^{\top} \rangle\rangle =
\sum_{i=1}^{k} \sqrt{\lambda_i} \ket{e_i} \otimes \ket{f_i}$. Then 
\begin{equation}
\begin{split}
\| \Phi \|_\diamond &=\| (\1 \otimes \sqrt{\rho} ) J (\Phi) (\1_d \otimes
\sqrt{\rho} ) \|_1 \\ &= \left\| (\Phi \otimes \1_d )\left(
\ket{{\sqrt{\rho}}^\top}\rangle \langle\bra{{\sqrt{\rho}}^\top} \right) 
\right\|_1
\\ &=\left\|  (\Phi \otimes \1_d )\left( (\1_d \otimes V )
\ket{{\sqrt{\rho}}^\top}\rangle \langle\bra{{\sqrt{\rho}}^\top} (\1_d \otimes V
)^\dagger \right)     \right\|_1
\end{split}
\end{equation}
where $V$ is a unitary matrix such that for the Schmidt decomposition of 
$| \sqrt{\rho}^{\top} \rangle\rangle$ we have
$(\1_d \otimes V )   \ket{{\sqrt{\rho}}^\top}\rangle = \sum_{i=1}^{k}
\sqrt{\lambda_i} \ket{e_i} \otimes \ket{i} $. Thus $(\Phi \otimes \1_d) (|
\sqrt{\rho}^\top \rangle\rangle \langle \langle \sqrt{\rho}^\top |) $ admits a
block structure. Neglecting all zeros we can obtain the same value of the trace
norm for a pure state with the second subsystem of dimension $k$.
\end{proof}
We are especially interested in the case when $\rho$ is a
one-dimensional projection, so we do not need to use entanglement, see
Remark~\ref{rem:rank-deficient} for necessary and sufficient condition in terms
of matrix $U$.

In the general case, the diamond norm of a Hermiticity-preserving $\Phi: M_{d_1}
\to M_{d_2}$ can be computed using the Semidefinite
Program~\ref{eq:sdp}~(from~\cite{watrous2013simpler}).

\begin{table}[!h]
\begin{minipage}[t]{2.1in}
\centerline{\underline{Primal problem}}\vspace{-4mm}
\begin{equation*}
\begin{split}
\text{maximize:}\quad &
\Tr X J(\Phi)
\\[2mm]
\text{subject to:}\quad & 
\begin{bmatrix}
I_{d_2} \otimes \rho & X\\
X^* & I_{d_2} \otimes \rho
\end{bmatrix}
\geq 0\\
& \rho\in \HH_{d_1}^+\\
& X \in M_{d_1d_2}(\mathbb C)
\end{split}
\end{equation*}
\end{minipage}
\qquad
\begin{minipage}[t]{2.1in}
\centerline{\underline{Dual problem}}\vspace{-4mm}
\begin{equation*}
\begin{split}
\phantom{(22)} \\
\text{minimize:}\quad & 
\| \operatorname{Tr}_1 Y \|_\infty
\\[2mm]
\text{subject to:}\quad &
\begin{bmatrix}
Y & -J(\Phi)\\
-J(\Phi) & Y
\end{bmatrix}
\geq 0\\
& Y \in \HH_{d_1 d_2}^+.
\end{split}
\end{equation*}
\end{minipage}
\caption{Semidefinite program for calculating diamond 
norm~\cite{watrous2013simpler}.}\label{eq:sdp}
\end{table}

This program allows us to compute the diamond norm for an arbitrary mapping
$\Phi$. Regretfully, it has one major drawback -- very lengthy computations in
practical applications. In theory, the complexity is polynomial in size of the
input matrix $J(\Phi)$ which has the size of $d_1 d_2 \times d_1 d_2$. Due to
this, the computational time and memory usage allow us to calculate the diamond
norm only for $d_1,d_2<10$.

The result stated in Proposition~\ref{th:diamond-2} is in actuality a simple
check whether $\PP_U$ can be distinguished perfectly from $\PP_{\1}$ and can
also be used to find a state $\rho \in \Omega_d$ for which
$\|\PP_{\1}-\PP_U\|_\diamond=2$. In the standard approach we would need to solve
the semidefinite programming problem stated in Program~\ref{eq:sdp}.

%
%
%

To state the condition~\eqref{eq:diag} formally as a semidefinite program we 
first introduce 
the notation
\begin{equation}
\begin{split}
A_0 & = \1 \\
A_i & = U \proj{i} + \proj{i} U^\dagger, \text{ for } i=1,\ldots, d \\
A_{i} & = \ii \left(\proj{i} U ^\dagger - U \proj{i} \right), \text{ for } 
i=d+1,\ldots,2d.
\end{split}
\end{equation}
Hence we arrive at the primal and dual problems presented in 
Program~\ref{eq:simple-sdp}
\begin{table}[h!]
\begin{minipage}[t]{2in}
\centerline{\underline{Primal problem}} \vspace{-4mm}
\begin{equation*}
\begin{split}
\text{maximize:}\quad &
\Tr \rho A_0
\\[2mm]
\text{subject to:}\quad & 
\Tr \rho A_i = 0\\
& \Tr \rho = 1 \\
& \rho\in \HH_{d}^+
\end{split}
\end{equation*}
\end{minipage}
\qquad
\begin{minipage}[t]{2in}
\centerline{\underline{Dual problem}}\vspace{-4mm}
\begin{equation*}
\begin{split}
\text{minimize:}\quad & 
\bra{0} Y \ket{0}
\\[2mm]
\text{subject to:}\quad &
\sum_{i=0}^{2d} A_i Y_{ii} \geq \1\\
& Y \in \HH_{d}.
\end{split}
\end{equation*}
\end{minipage}
\caption{Semidefinite program for checking perfect distinguishability
of von Neumann measurements.}
\label{eq:simple-sdp}
\end{table}

Note here that the maximization target is a trivial functional, as it reads 
$\tr \rho$ and later we constrain it to $\tr \rho =1$. Hence, the problem 
reduces to satisfying the constraints.

From \cite[Theorem 3]{ambrozie2007finding} we know that the primal problem
of Program~\ref{eq:simple-sdp} has no solutions $\rho \geq 0$ if and only if
\begin{equation}
\inf\limits_{(x_0, \ldots , x_{2d}) \in \mathbb{R} ^{2d+1}} \mathrm{e}^{x_0} 
\tr \left( \mathrm{e}^{ \sum_{i=1}^{2d} x_iA_i } \right) -x_0 = -\infty.
\end{equation}
This is equivalent to the condition that there exists a vector $(x_1, \ldots ,
x_{2d} ) \in \mathbb{R}^{2d} $ such that  $\sum_{i=1}^{2d} x_i A_i < 0$. In a
general case, this is a complicated problem and no analytical methods of finding
a solution are known. Nonetheless, there exist various algorithms, such as 
semidefinite programming, which approximate the solution 
\cite{ambrozie2007finding,bakonyi1995maximum}. The above considerations can be 
summarized as a lemma.
\begin{lemma}\label{lemma:perfect<=>no sign}
Let $U\in \UU_d$ and let $\PP_U, \PP_\1$  be POVMs. Then $\PP_U$ and $\PP_\1$
are perfectly distinguishable if and only if for all real vectors $(x_1, 
\ldots , x_{2d} ) \in
\mathbb{R}^{2d} $ we have $0 \in W\left(\sum_{i=1}^{2d} x_i A_i\right)$.
\end{lemma}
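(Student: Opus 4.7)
The plan is to establish the lemma as a chain of equivalences: perfect distinguishability of $\PP_U$ and $\PP_\1$ $\Leftrightarrow$ feasibility of the primal in Program~\ref{eq:simple-sdp} $\Leftrightarrow$ non-existence of $x \in \R^{2d}$ with $H(x):=\sum_{i=1}^{2d} x_i A_i < 0$ $\Leftrightarrow$ $0 \in W(H(x))$ for every $x \in \R^{2d}$. The first two links are essentially set up in the text preceding the lemma; the task is to make them rigorous and, above all, to supply the last link relating strict negative definiteness to the numerical range.

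For the first equivalence I would invoke Proposition~\ref{th:diamond-2}, which says perfect distinguishability holds iff there exists $\rho \in \Omega_d$ with $\diag(U^\dagger \rho)=0$. A direct calculation with the definitions of the $A_i$ gives, for $i=1,\ldots,d$,
\begin{equation}
\tr(\rho A_i) = 2\Re\bra{i}U^\dagger\rho\ket{i}, \qquad \tr(\rho A_{d+i}) = -2\Im\bra{i}U^\dagger\rho\ket{i},
\end{equation}
so the $2d$ scalar constraints $\tr(\rho A_i)=0$ jointly encode exactly $\diag(U^\dagger \rho)=0$. Combined with $\rho \geq 0$ and $\tr\rho = 1$, this is precisely the primal feasibility statement of Program~\ref{eq:simple-sdp}, so the two conditions are equivalent.

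The second equivalence is the Farkas-type alternative from \cite[Theorem~3]{ambrozie2007finding} already recalled above the lemma: the primal admits no PSD $\rho$ iff some $x \in \R^{2d}$ makes $H(x) < 0$. Consequently, perfect distinguishability is equivalent to the statement that $H(x)$ is never strictly negative definite as $x$ ranges over $\R^{2d}$.

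Finally, I would pass to the numerical-range condition, which is the only real subtlety. Since each $A_i$ is Hermitian, $H(x)$ is Hermitian and the Hausdorff--T\"oplitz theorem recalled in Section~\ref{sec:discrimination-without-entanglement} gives $W(H(x))=[\lambda_{\min}(H(x)),\lambda_{\max}(H(x))]$. Thus $0\in W(H(x))$ iff $\lambda_{\min}(H(x))\le 0 \le \lambda_{\max}(H(x))$. The condition ``$H(x)<0$ for no $x$'' reads $\lambda_{\max}(H(x))\ge 0$ for every $x$; applying it at $-x$ and using $H(-x)=-H(x)$ yields $\lambda_{\min}(H(x))\le 0$ for every $x$, so both halves of $0\in W(H(x))$ hold simultaneously for all $x$. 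The converse is immediate, since $0\in W(H(x))$ already forbids $H(x)<0$. This symmetry step is the main obstacle; once it is in place, the chain of equivalences closes and the lemma follows.
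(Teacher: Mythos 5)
Your proposal is correct and follows essentially the same route as the paper: Proposition~\ref{th:diamond-2} reduced to feasibility of the primal in Program~\ref{eq:simple-sdp}, the Farkas-type alternative of \cite{ambrozie2007finding}, and the observation that, by the symmetry $x \mapsto -x$ of the real span of $\{A_i\}_{i=1}^{2d}$, absence of a negative definite element is the same as $0 \in W\left(\sum_i x_i A_i\right)$ for all $x$. The paper compresses all of this into one sentence (``the real span of $A_i$ contains only matrices without a determined sign''); you have merely made the same steps explicit.
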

\begin{proof}
The lemma follows directly from the fact that the solution of 
primal problem in Program~\ref{eq:simple-sdp}  exists if and only if the real 
span of $A_i$ contains only matrices without a determined sign.
\end{proof}

The above considerations lead us to the following theorem, which proof is
postponed to Appendix~\ref{sec:proof-diamond2<->numerical-range}.
\begin{lemma}\label{th:diamond2<->numerical-range}
Let $U\in \UU_d$ and let $\PP_U, \PP_\1$  be von Neumann's POVMs. Then $\PP_U$ 
and $\PP_\1$ are perfectly distinguishable  if and only if for all diagonal 
matrices $D$ we have $0 \in W\left(UD + D^\dagger U^\dagger \right)$.
\end{lemma}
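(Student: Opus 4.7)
The plan is to derive Lemma~\ref{th:diamond2<->numerical-range} directly from Lemma~\ref{lemma:perfect<=>no sign} by identifying, as subsets of $\HH_d$, the real linear span of the family $\{A_1,\ldots,A_{2d}\}$ defined just before Program~\ref{eq:simple-sdp} with the set $\{UD+D^\dagger U^\dagger : D\in M_d\text{ diagonal}\}$.

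First, I would parametrize a complex diagonal matrix by its real and imaginary parts. Writing $D=\sum_{i=1}^d(a_i+\ii b_i)\proj{i}$ with $a_i,b_i\in\R$ and expanding gives
\[
UD+D^\dagger U^\dagger=\sum_{i=1}^d a_i\bigl(U\proj{i}+\proj{i}U^\dagger\bigr)+\sum_{i=1}^d b_i\cdot\ii\bigl(U\proj{i}-\proj{i}U^\dagger\bigr).
\]
By the definitions of $A_i$, this is exactly $\sum_{i=1}^d a_i A_i-\sum_{i=1}^d b_i A_{d+i}$. Conversely, for any real vector $(x_1,\ldots,x_{2d})$, taking $D=\sum_{i=1}^d(x_i-\ii x_{d+i})\proj{i}$ reproduces $\sum_{k=1}^{2d} x_k A_k$ in the displayed form. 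Hence the real span of $\{A_k\}_{k=1}^{2d}$ and the family of matrices $UD+D^\dagger U^\dagger$ coincide.

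Given this identification, the condition from Lemma~\ref{lemma:perfect<=>no sign}, namely $0\in W\bigl(\sum_{k=1}^{2d} x_k A_k\bigr)$ for every real $(x_1,\ldots,x_{2d})$, becomes, word for word, $0\in W(UD+D^\dagger U^\dagger)$ for every diagonal $D$, which is the statement to be proved.

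No real obstacle is expected: the entire argument is a linear-algebraic reparametrization of the span of $\{A_k\}$ followed by a direct appeal to the previous lemma. The only detail that must be handled carefully is the sign pairing $b_i\leftrightarrow -x_{d+i}$, which is easy to misstate if one is not attentive to whether $A_{d+i}$ is defined with $+\ii$ or $-\ii$ in its original definition.
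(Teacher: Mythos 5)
Your proposal is correct and follows essentially the same route as the paper: the paper's own proof of the backward implication uses exactly your substitution $D=\diag^\dagger(x_1-\ii x_{d+1},\dots,x_d-\ii x_{2d})$ to identify $UD+D^\dagger U^\dagger$ with $\sum_{k=1}^{2d} x_k A_k$ and then invokes Lemma~\ref{lemma:perfect<=>no sign}, and your sign bookkeeping for $A_{d+i}=\ii\left(\proj{i}U^\dagger-U\proj{i}\right)$ checks out. The only divergence is the forward implication, which the paper proves directly from Proposition~\ref{th:diamond-2} (a discriminator $\rho$ with $\diag(U^\dagger\rho)=0$ gives $\Tr\bigl((UD+D^\dagger U^\dagger)\rho\bigr)=0$ for every diagonal $D$), whereas you obtain it from the ``only if'' half of Lemma~\ref{lemma:perfect<=>no sign}; since that lemma is stated as an equivalence, your uniform reduction is equally valid.
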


As the above conditions for perfect discrimination require solving a
semidefinite problem, here we state a simple necessary and a simple sufficient
conditions based only on the absolute values of the diagonal elements of the
unitary matrix $U$. These turn out to be conclusive in the 3-dimensional case.

\begin{theorem}\label{th:trace}
Let $U \in \UU_d$ and $E \in \diaguni_d$ such that $\bra{i} UE \ket{i} \geq 0$.
Then the following holds:
\begin{enumerate}
\item if $\PP_U$ and $\PP_{\1}$ are perfectly distinguishable, then $\Tr (UE ) 
\leq d-2$
\item if $\Tr (UE) \leq 1$, then $\PP_U$ and $\PP_{\1}$ are perfectly 
distinguishable for odd $d\geq 3$.
\end{enumerate}
In particular, if $d=3$, then  $\PP_U$ and $\PP_{\1}$ are perfectly 
distinguishable if and only if $\Tr (UE ) \leq 1$.
\end{theorem}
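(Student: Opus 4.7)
Right-multiplying $U$ by a diagonal unitary only rephases the columns $U\ket{i}$ and leaves the projectors $\ket{u_i}\bra{u_i}$ intact, so $\PP_{UE}=\PP_U$. I may therefore absorb $E$ into $U$ and assume $E=\1$, $U_{ii}=r_i\ge 0$, and $R:=\Tr(UE)=\sum_i r_i$. I shall use two equivalent criteria for perfect distinguishability already established: Proposition~\ref{th:diamond-2} (existence of $\rho\in\Omega_d$ with $\diag(U^\dagger\rho)=0$) and Lemma~\ref{th:diamond2<->numerical-range} ($0\in W(UD+D^\dagger U^\dagger)$ for every diagonal $D$).

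\textbf{Part 1 (necessity of $R\le d-2$).} Given $\rho$ as in Proposition~\ref{th:diamond-2}, I decompose $\ket{u_i}=r_i\ket{i}+\sqrt{1-r_i^2}\,\ket{i^\perp}$ in the plane $\SPAN\{\ket{i},\ket{u_i}\}$; the vanishing of $\bra{u_i}\rho\ket{i}$ becomes $r_i\rho_{ii}=-\sqrt{1-r_i^2}\,\bra{i^\perp}\rho\ket{i}$. The $2\times 2$ PSD minor inequality $|\bra{i^\perp}\rho\ket{i}|^2\le\rho_{ii}\bra{i^\perp}\rho\ket{i^\perp}$ combined with $\bra{i^\perp}\rho\ket{i^\perp}\le 1-\rho_{ii}$ yields the per-row bound $\rho_{ii}\le 1-r_i^2$, which sums to $\sum_i r_i^2\le d-1$. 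Since this alone only gives $R\le\sqrt{d(d-1)}$ via Cauchy--Schwarz, I plan to sharpen it by running the argument simultaneously on two indices: the $4\times 4$ Gram-like matrix $(\bra{a}\rho\ket{b})_{a,b\in\{\ket{i},\ket{j},\ket{u_i},\ket{u_j}\}}$ is PSD, and imposing that both $\bra{u_i}\rho\ket{i}$ and $\bra{u_j}\rho\ket{j}$ vanish should furnish an extra unit of deficit and push the global bound to $\sum_i(1-r_i)\ge 2$. An equivalent contrapositive route, via Lemma~\ref{th:diamond2<->numerical-range}, is to construct, under the hypothesis $R>d-2$, an explicit diagonal $D$ (plausibly supported on the two indices with the largest $r_i$) such that $UD+D^\dagger U^\dagger$ is positive definite.

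\textbf{Part 2 (sufficiency for odd $d\ge 3$).} Assume $R\le 1$. By Lemma~\ref{th:diamond2<->numerical-range} it suffices to show that $H_D:=UD+D^\dagger U^\dagger$ has $0$ in its numerical range for every diagonal $D$. I split on $k:=\rank D$. If $k\le(d-1)/2$, the range of $H_D$ lies in $\SPAN\{U\ket{i},\ket{i}:D_{ii}\ne 0\}$, whose dimension is at most $2k\le d-1$; hence $H_D$ is singular and $0\in W(H_D)$. Here the parity of $d$ is essential: for even $d$ one cannot guarantee $2k<d$ once $k=d/2$. For $k\ge(d+1)/2$ the rank trick fails, and I rely on a spectral estimate: the trace bound $|\Tr H_D|=2|\Re\Tr(UD)|\le 2(\max_i|D_{ii}|)\sum_i r_i\le 2\max_i|D_{ii}|$ together with a matching lower bound on the operator norm of $H_D$ (obtained by testing on a carefully chosen unit vector built from $\ket{i_0}$ and $\ket{u_{i_0}}$ where $|D_{i_0i_0}|$ is maximal) squeezes the spectrum of $H_D$ in odd dimension so as to straddle the origin.

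\textbf{Part 3 and main obstacle.} For $d=3$ both bounds read $R\le 1$, so parts~1 and~2 combine into the stated equivalence. The principal obstacle is the sharp step in part~1: a single-row Cauchy--Schwarz delivers only $\sum_i r_i^2\le d-1$, whereas the statement demands the strictly stronger $\sum_i r_i\le d-2$, and closing the gap appears to require a genuinely multi-index PSD argument rather than any row-by-row estimate. Part~2 is also delicate in its high-rank regime, as the rank trick cleanly disposes only of $\rank D\le(d-1)/2$; the complementary case is exactly where both the assumption $R\le 1$ and the odd parity of $d$ are needed, and where a careful spectral comparison -- not just a trace or rank count -- must close the argument.
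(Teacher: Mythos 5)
There are genuine gaps in both halves of your argument, and in each case the missing step is the actual content of the theorem. In Part 1, your row-by-row PSD estimate correctly yields $\rho_{ii}\le 1-r_i^2$ and hence $\sum_i r_i^2\le d-1$, but as you concede this only gives $R\le\sqrt{d(d-1)}$, which is strictly weaker than $d-2$ for every $d$; for $d=3$ it gives $R\le\sqrt{6}$ instead of the required $R\le 1$, so it does not even deliver the ``only if'' half of the $d=3$ equivalence. The multi-index sharpening that ``should furnish an extra unit of deficit'' is a plan, not a proof, and it is not clear it can be extracted from Gram-matrix positivity alone. In Part 2 the rank argument for $\mathrm{rank}\,D\le (d-1)/2$ is correct but easy; the complementary high-rank case, which is where all the work lies, rests on an unproved lower bound for $\|UD+D^\dagger U^\dagger\|$ that would force indefiniteness, and knowing $|\Tr(UD+D^\dagger U^\dagger)|\le 2\max_i|D_{ii}|$ together with some norm lower bound does not by itself place $0$ between the extreme eigenvalues.

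The paper's proof sidesteps both difficulties by working with the numerical range of the unitary $UE'$ itself rather than with the discriminator $\rho$ or the Hermitian pencils $UD+D^\dagger U^\dagger$: by Corollary~\ref{th:diamond-equivalence} and Proposition~\ref{th:unit_channel}, perfect distinguishability is equivalent to $0\in W(UE')$ for every $E'\in\diaguni_d$, and since $UE'$ is normal, $W(UE')$ is the convex hull of its unit-modulus eigenvalues. Part 1 then reduces to plane geometry: if $0$ lies in the convex hull of $d$ unit vectors, then in the trace-maximizing configuration either two of them are antipodal and sum to $0$, or three of them contain $0$ in their convex hull and their sum has real part at most $1$; either way $\Tr(UE)\le d-2$. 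Part 2 is the contrapositive applied to every $E'$: if $0\notin W(UE')$, the eigenvalues lie in an open half-plane, and for odd $d=2k+1$ an angular ordering and pairing around the middle eigenvalue gives $|\Tr(UE')|>1$, contradicting $|\Tr(UE')|\le\sum_i|U_{ii}|=\Tr(UE)\le 1$. I suggest you rebuild both parts on this spectral reduction; your low-rank observation then becomes unnecessary.
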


\begin{proof}
Assume that $\PP_U$ and $\PP_{\1}$ are perfectly distinguishable. This implies
that $0 \in W(UE) $. Consider a set of possible eigenvalues of $UE$  that
maximizes $\Tr (UE)$ . It can be either $\{ \lambda , -\lambda , 1,  \ldots , 1
\}$ or $\{ \lambda_1 , \lambda_2 , \lambda_3 , 1,  \ldots , 1 \}$, where $0 \in
\textrm{conv}(\lambda_1 , \lambda_2 , \lambda_3 )$. To finish this part of the
proof it is enough to note that $\lambda_1 + \lambda_2 + \lambda_3 \leq 1$.

Let now $E \in \diaguni_d$. We note that $\vert \Tr (UE) \vert \leq 1$. Assume
$0 \not\in W(UE)$. Let $d=2k+1$ and $\lambda(UE)=\{\lambda_1, \ldots,
\lambda_d\}$ be a set of eigenvalues written in an angular order. It is enough
to see that
\begin{equation}
1=\left\vert \lambda_{k+1} \right\vert < \left\vert  \sum_{i=k}^{k+2} \lambda_i
\right\vert < \left\vert \sum_{i=k-1}^{k+3} \lambda_i \right\vert < \ldots
<\left\vert \sum_{i=1}^{d} \lambda_i \right\vert,
\end{equation}
where the first inequality comes from the fact that if we consider unit vectors
on a semicircle, then the absolute value of $\lambda_{k+1}$ can only increase
when added to the sum $\lambda_{k}+\lambda_{k+2}$, which cannot be zero as $0
\not\in W(UE)$. Other inequalities follow from similar reasoning. Thus $\vert
\Tr (UE) \vert > 1$, which finishes the proof.
\end{proof}

Now, we are ready to state the convex program for calculating diamond norm of 
the difference of two von Neumann measurements.

Using Proposition~\ref{th:unit_channel}, the value of the diamond norm from 
Theorem~\ref{th:ProjPOVM=minUnitary} can be rewritten as 
\begin{equation}
\begin{split}
\|\PP_U - \PP_\1 \|_{\diamond} 
&= 
\min_{E \in \diaguni_d} \|\Phi_{UE} - \Phi_\1\|_\diamond =\\ 
&= \min_{E \in \diaguni_d} 2 \sqrt{1- \min_{\rho \in \Omega_d} \vert \Tr \rho 
UE 	\vert^2}\\
&= 2 \sqrt{1- \max_{E \in \diaguni_d} \min_{\rho \in \Omega_d} \vert 
	\Tr \rho UE 
	\vert^2}.
\end{split}
\end{equation}
As shown in Lemma~\ref{th: saddle} from
Appendix~\ref{sec:proof-diamond-equivalence}, we may exchange the minimization
with the maximization and obtain
\begin{equation}
\begin{split}
\nu&:=\max_{E \in \diaguni_d} \min_{\rho \in \Omega_d} \vert  \Tr \rho UE \vert
=
\min_{\rho \in \Omega_d} \max_{E \in \diaguni_d} \vert  \Tr \rho UE \vert =\\
&=
\min_{\rho \in \Omega_d} \sum_{i} \vert  \bra{i}\rho U \ket{i} \vert.
\end{split}
\end{equation}
Now we note that values $\bra{i}\rho U \ket{i} = \tr \rho U \ket{i}\bra{i}$ are
the coefficients of a projection, in the Hilber-Schmidt space, of $\rho$ onto a
subspace $\mathcal{L}_U$ spanned by unit orthogonal vectors $\{U
\ket{i}\bra{i}\}_{i}$. Therefore, the value  $\nu$ is a minimal taxicab norm of
a projection of density matrix $\rho$ onto a subspace $\mathcal{L}_U$ calculated
in the basis $\{U \ket{i}\bra{i}\}_{i}$. The simplified geometrical sketch of
this is presented in Fig.~\ref{fig:cone}.

\begin{figure}[h]
\centering 
\includegraphics[width=0.7\linewidth]{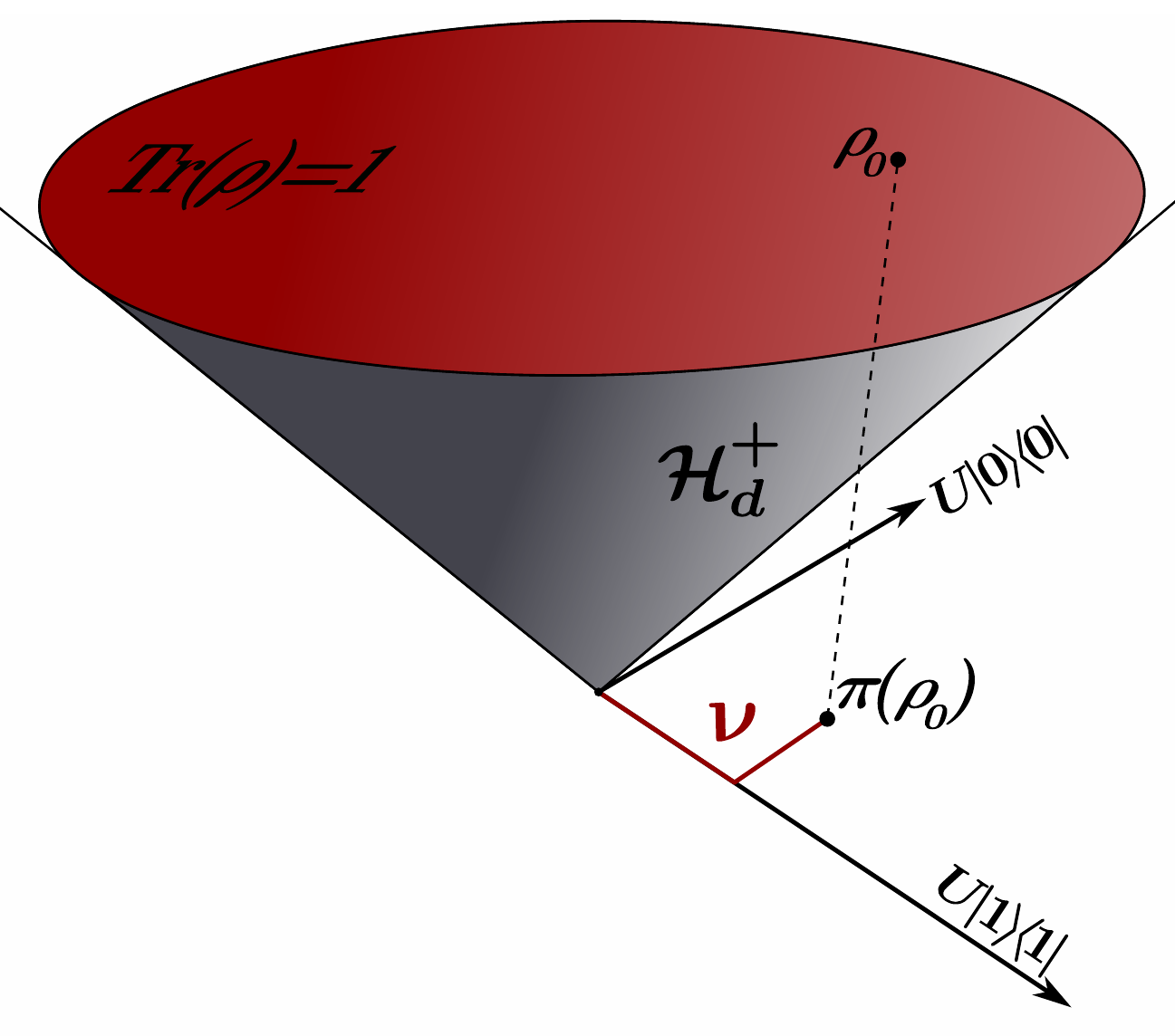} 

\caption{Sketch of the Hilber-Schmidt space with the cone of positive
semidefinite matrices and its intersection with the affine plane $\Tr(\cdot)=1$.
The optimal density matrix $\rho_0$ is marked together with its projection
$\pi(\rho_0)$ onto a plane $\mathcal{L}$ spanned by an orthonormal vectors $\{U
\ket{i}\bra{i}\}_{i}$. The taxicab distance to the origin of the projection
gives the value $\nu$ which in turn determines the diamond norm.}
\label{fig:cone}
\end{figure}

The value $\nu$ can be calculated using cone programming and we get the SDP 
shown in Program~\ref{prog:convex-prog-diamond}.
\begin{table}[!h]
\begin{center}
\begin{minipage}{5cm}
\centering\underline{Primal problem}
\begin{equation*}
\begin{split}
\text{minimize:}\quad &
\| \diag(U^\dagger \rho) \|_1
\\[2mm]
\text{subject to:}\quad & 
\tr \rho = 1, \\
& \rho \geq 0.
\end{split}
\end{equation*}
\end{minipage}
\caption{Convex program for calculation of the diamond norm of the difference 
of two von Neumann measurements.}
\label{prog:convex-prog-diamond}
\end{center}
\end{table}
The minimum value $\nu$ of this program gives us the value of the diamond norm
as
\begin{equation}
\| \PP_U - \PP_\1 \|_\diamond = 2\sqrt{1-\nu^2}.
\end{equation}
We use a state $\rho$ which minimizes the objective function in 
Program~\ref{prog:convex-prog-diamond} to construct the input state for 
discrimination scheme. The input state $\ket{\psi}$ is a purification of $\rho$,
thus its rank is equal to the dimension of the additional subsystem needed for 
optimal 
procedure.
This program is polynomial in the size of the input matrix $U$.
\section{Special cases}\label{sec:special-cases}

In this section we will present several examples of projective measurements,
which can be perfectly distinguished from a measurement in the computational 
basis.

\subsection{Fourier matrices}
First, we consider the Fourier unitary matrices $F_2 \in \UU_2$ and $F_3 \in
\UU_3$. We note that unitary channels $\Phi_{F_2}, \Phi_{F_3}$ are perfectly
distinguishable from the corresponding identity channels. 
On the other hand, it is not possible to perfectly distinguish $\PP_{F_2},
\PP_{F_3}$ from the $\PP_{\1_2}$ and $\PP_{\1_3}$, which follows from Theorem
\ref{th:trace}.

In the case of higher dimensions, we have the following corollary.
\begin{corollary}\label{cor:fourier}
Let $d \geq 4$ and $F_d \in \UU_d$ be a Fourier matrix. Then $\PP_{F_d}$ is
perfectly distinguishable from $\PP_{\1}$. The perfect discrimination may be
performed with an entangled input state with additional subsystem of dimension 
2. 
Moreover, if $d = m^2 n$ for $m,n \in \mathbb{N}, \  m > 1$, then the
discrimination can be done without an entangled input, while this is not 
possible 
for prime dimension.
\end{corollary}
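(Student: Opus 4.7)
The plan is to invoke Proposition~\ref{th:diamond-2}: $\PP_{F_d}$ and $\PP_\1$ are perfectly distinguishable iff there exists $\rho \in \Omega_d$ with $\diag(F_d^\dagger \rho)=0$, and moreover the rank of $\rho$ controls the ancilla dimension, so rank $1$ corresponds to no entanglement and rank $2$ corresponds to a qubit ancilla.

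For the ``without entanglement when $d = m^2 n$'' claim I would exhibit an explicit pure state. Set
\begin{equation*}
|v\rangle = \frac{1}{\sqrt{m}}\sum_{j=0}^{m-1}|1 + jmn\rangle.
\end{equation*}
Using the geometric-sum identity that $\sum_{j=0}^{m-1} e^{2\pi i jk/m}$ equals $m$ when $m \mid k$ and vanishes otherwise, one computes that $F_d|v\rangle$ is supported on the multiples of $m$. Since $\mathrm{supp}(v) \subseteq 1 + m\mathbb{Z}_d$ and $m \geq 2$, the supports of $v$ and $F_d v$ are disjoint; the symmetry $F_d^\dagger = \overline{F_d}$ yields $\mathrm{supp}(F_d^\dagger v) = \mathrm{supp}(F_d v)$, so $(F_d^\dagger v)_i \overline{v_i} = 0$ for every $i$. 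Thus $\rho = |v\rangle\langle v|$ is a rank-$1$ discriminator. For the converse (prime $d$), I would invoke the Chebotarev theorem on the DFT: for prime $d$ every square submatrix of $F_d$ is nonsingular, so in particular every principal submatrix has full rank, and by Remark~\ref{rem:rank-deficient} no rank-$1$ discriminator exists, forcing the use of entanglement.

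The main and hardest step is the rank-$\leq 2$ claim uniformly for all $d \geq 4$. When $d$ has a nontrivial square factor the rank-$1$ construction above already suffices; the substantive obstacle is the square-free case (prime $d \geq 5$ and square-free composite $d \geq 6$), where Chebotarev precludes any rank-$1$ solution, so the coset/subgroup trick used for $d = m^2 n$ is unavailable. The plan is to write $\rho = \lambda|v_1\rangle\langle v_1| + (1-\lambda)|v_2\rangle\langle v_2|$ and choose $v_1, v_2$ so that the two vectors $a^{(j)} := \bigl(\overline{(v_j)_i}(F_d^\dagger v_j)_i\bigr)_{i=0}^{d-1}$ ($j = 1,2$) are antiparallel in $\mathbb{C}^d$, with $\lambda \in (0,1)$ chosen so that $\lambda a^{(1)} + (1-\lambda) a^{(2)} = 0$. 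A dimension count supports existence for $d \geq 4$: Hermitian $\rho$ with $\Tr \rho = 1$ and $\diag(F_d^\dagger\rho) = 0$ form an affine space of real dimension $d^2 - 2d - 1$, while rank-$\leq 2$ PSD matrices form a variety of real dimension $4d - 4$. Concretely one can take $v_1, v_2$ supported on two carefully chosen arithmetic progressions of $\mathbb{Z}_d$ with phases tuned so the contributions cancel coordinate by coordinate; in low dimensions the resulting $\rho$ can be cross-checked against the cone Program~\ref{prog:convex-prog-diamond}, whose optimizer indeed has rank equal to the required ancilla dimension.
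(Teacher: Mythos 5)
Your treatment of the two secondary claims is sound. For $d=m^2n$ your coset vector $\ket{v}$ supported on $1+mn\Z$ is a valid rank-one discriminator: the disjoint-support computation via the geometric sum is correct (the paper uses the even simpler $X'=(\ket{0}-\ket{mn})(\bra{0}-\bra{mn})$, but both work). The prime case, via Chebotarev's theorem on minors of the Fourier matrix combined with Remark~\ref{rem:rank-deficient}, is exactly the paper's argument.

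The gap is in the claim you yourself flag as the hardest: the existence of a rank-$\leq 2$ discriminator for \emph{every} $d\geq 4$, in particular for square-free $d$ where no rank-one solution exists. What you offer there is a plan, not a proof. The dimension count is only heuristic: the affine space $\{\rho=\rho^\dagger,\ \Tr\rho=1,\ \diag(F_d^\dagger\rho)=0\}$ and the set of rank-$\leq 2$ Hermitian matrices need not intersect transversally, and even if they did, nothing in the count guarantees the intersection meets the positive semidefinite cone. ``Two carefully chosen arithmetic progressions with phases tuned so the contributions cancel coordinate by coordinate'' is precisely the construction that must be exhibited and verified, and numerical cross-checks in low dimension do not cover all $d\geq 4$. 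The paper closes this step with an explicit closed-form matrix $X$ (equation~\eqref{eq:state-f}), supported on the coordinates $\{0,1,d-1\}$ with entries built from $\cos(2\pi/d)$; one checks directly that it is positive semidefinite for $d\geq 4$ (it is a nonnegative combination of $ww^\dagger$ with $w=2\ket{0}-\ket{1}-\ket{d-1}$ and $uu^\dagger$ with $u=\ket{1}+\ket{d-1}$), has rank at most $2$, and satisfies $\diag(F_d^\dagger X)=0$. To complete your proof you would need to produce such an explicit $\rho$, or a genuinely constructive existence argument, valid uniformly in $d\geq 4$.
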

\begin{proof}
Define a matrix $X \in M_d$
\begin{equation}
X= \begin{bmatrix}
4\cos{\frac{2 \pi}{d}}&-2 \cos{\frac{2 \pi}{d}}& 0 &\dots & 0 &-2 \cos{\frac{2 
\pi}{d}} \\
-2  \cos{\frac{2 \pi}{d}}& 1 & 0&\dots  & 0&1\\
0& 0 & 0&\dots  & 0 & 0 \\
\vdots& \vdots & \vdots&\ddots &\vdots &\vdots \\
0& 0 & 0&\dots  & 0 & 0 \\
-2 \cos{\frac{2 \pi}{d}}& 1 & 0 & \dots & 0 & 1
\end{bmatrix},\label{eq:state-f}
\end{equation}
which is positive semidefinite for $d\geq 4$.
Direct calculations show that 
\begin{equation}
\diag(F_d^\dagger X) = 0
\end{equation}
and $\mathrm{rank}(X) \leq 2$ so, as stated in Proposition \ref{th:diamond-2}, 
we 
have perfect distinguishability.
In the case of $d=m^2 n$ we take 
\begin{equation}
X' = (\ket{0}-\ket{mn})(\bra{0}-\bra{mn}),
\end{equation} 
then it holds that  $\diag (F_d^\dagger X')= 0.$ To check that there does not 
exist rank-one perfect discriminator for prime dimension we need to check if 
among principal submatrices of a Fourier matrix there does not exist a 
rank-deficient one (Remark \ref{rem:rank-deficient}). 
The Chebotarev theorem on roots of unity states that such a submatrix does not 
exist, see e.g.~\cite{frenkel2003simple} or Theorem 4 in~\cite{delvaux2008rank}.
\end{proof}

Th optimal input states for discrimination scheme are purifications of matrices 
$X$ in the proof.
%


\subsection{Reflection matrices}
Now, we will consider a unitary matrix given by a mirror isometries.
\begin{corollary}\label{cor:reflection_half}
Let $\UU_d \ni U= \1 - 2 \ketbra{x}{x}$. Then $\PP_U$ is perfectly 
distinguishable
from $\PP_{\1}$ if and only if $\omega=\max_i |x_i|^2 \leq \frac12$. It is
also possible to use a discriminator $\rho \in \Omega_d$ such that
$\mathrm{rank}(\rho) \leq 2$. Moreover, we can find $\rho$ such that
$\mathrm{rank}(\rho)=1$ if and only if
\begin{equation}\label{eq:subset}
\exists_{\Delta \subset \{0,1,\dots,d-1\}}:\sum_{i\in \Delta} |x_i|^2 = \frac12.
\end{equation}
In the case when $\omega>\frac{1}{2}$, we have $\|\PP_U - \PP_\1\|_\diamond 
=2\sqrt{1-4(\omega-\frac12)^2}.$
\end{corollary}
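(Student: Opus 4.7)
The plan is to invoke Proposition~\ref{th:diamond-2} together with the saddle-point identity $\nu=\max_{E\in\diaguni_d}\min\{|z|:z\in W(UE)\}$ derived just before Program~\ref{prog:convex-prog-diamond}. Since $U^\dagger=U=\1-2\ketbra{x}{x}$, the equation $\diag(U^\dagger\rho)=0$ unfolds to
\begin{equation}\label{eq:ref-cond}
\rho_{ii}=2\,x_i\,\overline{(\rho x)_i}\qquad(i=0,\dots,d-1),
\end{equation}
and summing over $i$ forces the trace-level constraint $\bra{x}\rho\ket{x}=\tfrac12$. I may assume WLOG that $x_i\geq0$, since absorbing the phases of the $x_i$ into a diagonal unitary only relabels $\PP_U$.

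For the sufficiency direction I would try the rank-two ansatz $\rho=\tfrac12\ketbra{x}{x}+\tfrac12\ketbra{\phi}{\phi}$ with $\braket{x}{\phi}=0$ and $|\phi_i|=|x_i|$. A short calculation gives $\rho x=\tfrac12\ket{x}$ and $\rho_{ii}=|x_i|^2$, so \eqref{eq:ref-cond} holds; the remaining task is to find phases $\theta_i$ with $\phi_i=x_i e^{\ii\theta_i}$ and $\sum_i|x_i|^2 e^{\ii\theta_i}=0$. By the classical closed-polygon criterion such phases exist iff no weight $|x_i|^2$ exceeds half the total, that is iff $\omega\leq\tfrac12$; and the resulting $\rho$ has rank at most $2$. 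For the rank-one case I would put $\rho=\ketbra{v}{v}$ and note that $\diag(U\rho)_i=\bar v_i(v_i-2x_i\braket{x}{v})$. On the support $\Delta=\{i:v_i\neq0\}$ this forces $v_i=2\braket{x}{v}\,x_i$, and substituting back into $\braket{x}{v}$ gives $\sum_{i\in\Delta}|x_i|^2=\tfrac12$; conversely any such $\Delta$ produces an explicit rank-one discriminator of the form $v_i\propto x_i$ on $\Delta$, $v_i=0$ elsewhere.

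It remains to pin down the case $\omega>\tfrac12$; the resulting diamond norm is strictly less than $2$, which also yields the ``only if'' direction of the first claim. Let $i^*$ attain $|x_{i^*}|^2=\omega$. Taking $\rho=\ketbra{i^*}{i^*}$ in Program~\ref{prog:convex-prog-diamond} produces objective value $|1-2\omega|=2\omega-1$, so $\nu\leq 2\omega-1$. For the matching lower bound I propose the diagonal unitary $E^*=\1-2\ketbra{i^*}{i^*}$ and show that $W(UE^*)$ lies in the half-plane $\{\Re z\geq 2\omega-1\}$. For a unit vector $\ket{v}$, writing $b=v_{i^*}$ and $c=\braket{x}{v}-x_{i^*}b$ and expanding via $UE^*=E^*-2\ket{x}\bra{x}E^*$, the cross-terms proportional to $\Re(c\bar b)$ cancel and leave
\begin{equation}
\bra{v}UE^*\ket{v}=\bigl[1-2|c|^2-2(1-\omega)|b|^2\bigr]-4\ii x_{i^*}\,\mathrm{Im}(c\bar b).
\end{equation}
Since $c=\sum_{i\neq i^*}x_i v_i$, Cauchy--Schwarz gives $|c|^2\leq(1-\omega)(1-|b|^2)$, so the bracketed real part is at least $2\omega-1>0$; therefore $|\bra{v}UE^*\ket{v}|\geq 2\omega-1$, yielding $\nu\geq 2\omega-1$ and hence $\|\PP_U-\PP_\1\|_\diamond=2\sqrt{1-(2\omega-1)^2}=2\sqrt{1-4(\omega-\tfrac12)^2}$. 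The main obstacle I anticipate is exactly this lower bound: one must guess the correct $E^*$ (the diagonal reflection at the heaviest coordinate) and carry through the algebra showing that the $\Re(c\bar b)$ terms cancel, leaving a real part that is manifestly controlled by $\omega$.
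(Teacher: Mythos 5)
Your proposal is correct, and while the sufficiency construction for $\omega\leq\tfrac12$ (the rank-two state $\tfrac12\ketbra{x}{x}+\tfrac12\ketbra{\phi}{\phi}$ with matched moduli and the closed-polygon criterion) coincides with the paper's, the other two parts take genuinely different routes. For the rank-one characterization the paper passes through Remark~\ref{rem:rank-deficient}, identifying subsets $\Delta$ of weight $\tfrac12$ with rank-deficient principal submatrices of $U$; you instead solve $\diag(U\ketbra{v}{v})=0$ directly, which forces $v_i=2\braket{x}{v}x_i$ on the support and hence $\sum_{i\in\Delta}|x_i|^2=\tfrac12$ --- a more self-contained computation that also hands you the explicit discriminator. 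For $\omega>\tfrac12$ the paper diagonalizes $UE'$ with $E'=\1-2\proj{0}$, reads off the outlying eigenvalues $2\omega-1\pm 2\sqrt{\omega(1-\omega)}\,\ii$, and certifies optimality of $E'$ by checking the saddle-point condition of Lemma~\ref{th: state} on the eigenvectors ($|\lambda_{+,i}|=|\lambda_{-,i}|$). You instead sandwich $\nu$: the state $\ketbra{i^*}{i^*}$ gives $\min_\rho\max_E\leq 2\omega-1$, and the half-plane containment $W(UE^*)\subseteq\{\Re z\geq 2\omega-1\}$ (via the cancellation of the $\Re(c\bar b)$ cross-terms and Cauchy--Schwarz on $c$) gives $\max_E\min_\rho\geq 2\omega-1$; since $\max_E\min_\rho\leq\min_\rho\max_E$ always holds, both equal $2\omega-1$ and you never need to invoke the full saddle-point machinery of Lemmas~\ref{th: saddle} and~\ref{th: state}. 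This buys a shorter, more robust certificate of optimality (weak duality plus two explicit witnesses) at the cost of having to guess $E^*$; the paper's route, conversely, explains \emph{why} $E'$ is optimal in terms of the general theory. Both arguments correctly conclude $\|\PP_U-\PP_\1\|_\diamond=2\sqrt{1-4(\omega-\tfrac12)^2}$ and hence the ``only if'' of the first claim.
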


\begin{proof}
If $\omega \leq \frac12$, we provide a construction
\begin{equation}
\rho = \frac12 \ketbra{x}{x} + \frac12 \ketbra{y}{y},
\end{equation}
where
\begin{equation}
y_i  = |x_i| \mathrm{e}^{\ii \alpha_i}
\end{equation}
such that 
\begin{equation}
\braket{y}{x} = 0 = \sum_i |x_i|^2 \mathrm{e}^{\ii (\arg(x_i) - \alpha_i)}.
\end{equation}
By the polygon inequality we know that such phases $\alpha_i$ do exist, and
therefore we receive $\diag(U^\dagger \rho)=0$.

Next, we can note that the existence of a set $\Delta \subset 
\{0,1,\dots,d-1\}$, 
such that 
$\sum_{i\in \Delta} |x_i|^2 = \frac12$, is equivalent to the fact that 
principal 
submatrix 
$V=\{U_{ij}\}_{{i,j\in \Delta}}$ is rank-deficient. Thus, the third statement 
follows from Remark \ref{rem:rank-deficient}.

Now, we assume that $\omega = |x_0|^2 > \frac12$. The case when $\omega=1$ is 
trivial, so we assume $\omega<1$. Let $E'=\1 - 2\proj{0}$. 
Direct calculation gives us
\begin{equation}
\lambda(UE')=\{2|x_0|^2-1 \pm 2|x_0| 
\sqrt{1-|x_0|^2}\ii,\overbrace{1,\ldots,1}^{d-2}\}.
\end{equation}
Eigenvectors corresponding to outlying eigenvalues have the form
\begin{equation}
\ket{\lambda_\pm}= \ket{x}+(-x_0 \pm \frac{x_0}{|x_0|} \sqrt{1-|x_0|^2}\ii) 
\ket{0}
\end{equation}
and from this form we can see that $|\lambda_{+,i}|=|\lambda_{-,i}|$, and 
according to proof of Theorem \ref{th:ProjPOVM=minUnitary} we have
\begin{equation}
\max_{E \in \diaguni} \min_{ \rho \in \Omega_d} | \Tr UE\rho|=\min_{ \rho \in 
\Omega_d} |\Tr UE'\rho|=2|x_0|^2-1.
\end{equation} Utilizing Theorem \ref{th:ProjPOVM=minUnitary} and Proposition 
\ref{th:unit_channel} we obtain
\begin{equation}
\|\PP_U - \PP_\1\|_\diamond=2\sqrt{1-4(\omega-\frac12)^2}.
\end{equation}
\end{proof}

\section{Final remarks}\label{sec:final-remarks}
In this work we have studied the problem of single shot discrimination of two
von~Neumann measurements with finitely many outcomes. Our aim was to design an
optimal strategy for the discrimination in both cases: with and without the
assistance of entanglement. We have parametrized both measurements with a single
unitary matrix $U$ and expressed the results using the properties of $U$. In the
first case, when we do not use entanglement, the optimal probability can be
expressed as a function of minimal singular value of a submatrix of the unitary
matrix $U$, see Corollary~\ref{lem:proj-no-ent-dist}. We have also provided a
construction of an optimal input state which enables performing optimal
discrimination strategy in this scenario. In the second case of
entanglement-assisted discrimination, the optimal probability is a function of
minimal taxicab norm of a projection (in the Hilbert-Schmidt space) of a density
matrix onto a plane spanned by vectors $U\ketbra{i}{i}$, see
Theorem~\ref{th:ProjPOVM=minUnitary} and discussion below. Moreover, we have
provided a convex program for calculating this optimal probability and deriving
the optimal input state for entanglement-assisted discrimination scheme.
Finally, we have considered special cases of Fourier matrices and mirror
isometries.

\appendix

\section{Proof of 
Theorem~\ref{th:ProjPOVM=minUnitary}}\label{sec:proof-diamond-equivalence}

In this section we focus on the proof of the case when $\|\PP_U-\PP_V
\|_\diamond < 2$. The equality is covered by
Corollary~\ref{th:diamond-equivalence}, whose proof is presented in
Appendix~\ref{sec:proof-corollary-diamond-equivalence}.

In order to state the proof of Theorem~\ref{th:ProjPOVM=minUnitary} we will need
the following lemmas. Their proofs are in Appendix~\ref{sec:appendix-lemmas}.

The first lemma states that the distance between von Neumann measurements can be
upper bounded by the distance between unitary channels.
\begin{lemma}\label{lem: inequality}
	Let $U \in \UU_d$ and let $\PP_U$ and $\PP_{\1}$ be two projective 
	measurements. Then for diagonal unitary matrix $E$ of dimension $d$ we have
	\begin{equation}
		\|\PP_U - \PP_{\1}\|_\diamond \leq \|\Phi_{UE} - \Phi_{\1}\|_\diamond,
	\end{equation}
	where $\Phi_U$ is unitary channel.
\end{lemma}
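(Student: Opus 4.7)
The plan is to realize the measurement $\PP_U$ as the completely dephasing channel $\Delta(X) = \sum_i \proj{i} X \proj{i}$ composed with a suitable unitary channel, and then to invoke monotonicity of the diamond norm under post-composition with a CPTP map. Using $P_i = U\proj{i}U^\dagger$ together with cyclicity of the trace, one checks that
\[
\PP_W(\rho) = \sum_i (W^\dagger \rho W)_{ii}\proj{i} = (\Delta \circ \Phi_{W^\dagger})(\rho)
\]
for every $W \in \UU_d$; in particular $\PP_{\1} = \Delta$.

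Next I would exploit the elementary phase invariance $\PP_U = \PP_{UE}$, valid for every diagonal unitary $E$, which holds because $UE\ket{i} = E_{ii}U\ket{i}$ differs from $U\ket{i}$ only by a unit-modulus scalar, leaving the rank-one effects $\proj{u_i}$ unchanged. Combining these two observations gives $\PP_U = \Delta \circ \Phi_{(UE)^\dagger}$ and therefore
\[
\PP_U - \PP_{\1} \;=\; \Delta \circ \bigl(\Phi_{(UE)^\dagger} - \Phi_{\1}\bigr).
\]
Since $\Delta$ is CPTP and the bracketed difference is Hermiticity-preserving, applying $\Delta$ on the output cannot increase the diamond norm, so
\[
\|\PP_U - \PP_{\1}\|_\diamond \;\leq\; \|\Phi_{(UE)^\dagger} - \Phi_{\1}\|_\diamond.
\]

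The only remaining point, which I expect to be the mildly subtle one, is to strip the dagger off $(UE)^\dagger$ on the right-hand side. For this I would invoke Proposition~\ref{th:unit_channel}: the distance $\|\Phi_V - \Phi_{\1}\|_\diamond = 2\sqrt{1-\nu^2}$ depends on $V$ only through $\nu = \min\{|x| : x \in W(V^\dagger)\}$, and since $W(V^\dagger) = \overline{W(V)}$ the minimal modulus is invariant under $V \leftrightarrow V^\dagger$. Applying this with $V = UE$ converts $\|\Phi_{(UE)^\dagger} - \Phi_{\1}\|_\diamond$ into $\|\Phi_{UE} - \Phi_{\1}\|_\diamond$ and yields the claim. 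No computation more serious than the dephasing decomposition and the harmless phase freedom $\PP_U = \PP_{UE}$ enters; everything else is a standard property of the diamond norm.
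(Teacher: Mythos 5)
Your proof is correct, and it takes a genuinely different route from the paper's. The paper works concretely with an optimal discriminator $\rho$ of $\PP_U-\PP_{\1}$: it expands $\left\|(\1\otimes\sqrt{\rho})J(\PP_U-\PP_{\1})(\1\otimes\sqrt{\rho})\right\|_1$ as a sum of block trace norms, uses the operational characterization $\|A\|_1=\max_V|\tr AV|$ to replace the block-optimal unitaries $\sum_i\proj{i}\otimes V_i$ by a maximization over all of $\UU(d^2)$, and thereby bounds the pinched (block-diagonal) matrix by the unpinched one, which is $(\1\otimes\sqrt{\rho})\left(\projV{\1}-\projV{(UE)^\top}\right)(\1\otimes\sqrt{\rho})$ and hence at most $\|\Phi_{(UE)^\top}-\Phi_{\1}\|_\diamond$. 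You instead make the underlying mechanism structural: the identity $\PP_W=\Delta\circ\Phi_{W^\dagger}$ together with the phase freedom $\PP_U=\PP_{UE}$ gives $\PP_U-\PP_{\1}=\Delta\circ\left(\Phi_{(UE)^\dagger}-\Phi_{\1}\right)$, and submultiplicativity of the diamond norm with $\|\Delta\|_\diamond=1$ finishes the job. This is shorter and makes the ``dephasing cannot increase the norm'' idea explicit rather than re-deriving the pinching inequality by hand; the paper's version, in exchange, exhibits the intermediate quantity evaluated at a specific input state, which is the form it reuses in the proof of Theorem~\ref{th:ProjPOVM=minUnitary}. Both proofs must strip a dagger or a transpose at the end: you do it via $W(V^\dagger)=\overline{W(V)}$ and Proposition~\ref{th:unit_channel}, the paper silently identifies $\|\Phi_{(UE)^\top}-\Phi_{\1}\|_\diamond$ with $\|\Phi_{UE}-\Phi_{\1}\|_\diamond$; an even more elementary alternative is to write $\Phi_{V^\dagger}-\Phi_{\1}=\Phi_{V^\dagger}\circ(\Phi_{\1}-\Phi_{V})$ and use unitary invariance of the diamond norm.
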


The next lemma states that the optimal point of our optimization problem is in
fact a saddle point. Hence, we may change the order of minimization and
maximization.
\begin{lemma}\label{th: saddle}
	Let $U \in \UU_d$. Then
	\begin{equation}
		\min_{\rho \in \Omega_d} \max_{E \in \diaguni_d} |\Tr(\rho U 
		E)|=\max_{E \in \diaguni_d} \min_{\rho \in \Omega_d} |\Tr(\rho 
		U 
		E)|.
	\end{equation}
\end{lemma}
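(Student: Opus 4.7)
The plan is to derive the identity from Sion's minimax theorem after convexifying the outer maximization from the torus $\diaguni_d$ to a polydisc, and then descend back to the torus using the geometry of the numerical range. Weak duality $\min_\rho\max_E \ge \max_E\min_\rho$ is automatic from bilinearity, so I focus on the reverse inequality.

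First I would compute, by aligning the phases of $E=\diag(e^{i\phi_1},\ldots,e^{i\phi_d})$, that $\max_{E\in\diaguni_d}|\Tr(\rho UE)|=\sum_{i=1}^d|\bra{i}\rho U\ket{i}|$, and then via the identity $|z|=\max_{|c|\le 1}\Re(\bar c z)$ applied term by term rewrite this as $\max_{T\in D}\Re\Tr(\rho UT)$, where $D=\{T\in M_d:\,T\text{ diagonal},\,|T_{ii}|\le 1\}$ is the compact convex polydisc whose extreme points coincide with $\diaguni_d$. Sion's minimax theorem then applies to the $\R$-bilinear (hence continuous and linear in each variable) function $(\rho,T)\mapsto\Re\Tr(\rho UT)$ on the compact convex product $\Omega_d\times D$, giving
\begin{equation*}
\min_{\rho}\max_{T\in D}\Re\Tr(\rho UT)=\max_{T\in D}\min_{\rho}\Re\Tr(\rho UT),
\end{equation*}
whose left side, by the previous paragraph, equals the LHS of the lemma.

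I would then match the right side above to the RHS of the lemma in two substeps. In substep~(a), for each $E\in\diaguni_d$, $\min_\rho|\Tr(\rho UE)|$ is the Euclidean distance from $0$ to the convex numerical range $W(UE)\subset\C$; choosing $\theta$ so that the closest point $z^\ast$ satisfies $e^{i\theta}z^\ast=|z^\ast|$, the supporting hyperplane of $W(UE)$ at $z^\ast$ forces $W(e^{i\theta}UE)\subset\{\Re z\ge|z^\ast|\}$, whence $\min_\rho\Re\Tr(\rho U(e^{i\theta}E))=\min_\rho|\Tr(\rho UE)|$; since $e^{i\theta}E\in\diaguni_d$, this equates $\max_{E\in\diaguni_d}\min_\rho|\Tr(\rho UE)|$ with $\max_{E\in\diaguni_d}\min_\rho\Re\Tr(\rho UE)$. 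In substep~(b) I would show that the concave, positively $1$-homogeneous function $m(T)=\lambda_{\min}\bigl((UT+T^\dagger U^\dagger)/2\bigr)$ attains its maximum over $D$ on $\diaguni_d$; the case $m(T^\ast)\le 0$ is trivial since the RHS of the lemma is nonnegative, while when $m(T^\ast)>0$ the $1$-homogeneity forces $\max_i|T^\ast_{ii}|=1$ at any maximizer, after which a coordinate-by-coordinate concavity analysis pushes each remaining $|T^\ast_{ii}|$ onto the unit circle without decreasing $m$.

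I expect the main obstacle to be substep~(b): showing that the concave function $m$ on the convex polydisc attains its maximum at an extreme point is not a generic property of concave functions on convex sets, and requires leveraging both the positive homogeneity of $m$ and the product structure of $D$ as $d$ independent $|t_i|\le 1$ factors. Once the obstacle is cleared, chaining Step~2 with substeps~(a) and~(b) establishes that the LHS of the lemma equals $\max_{E\in\diaguni_d}\min_\rho\Re\Tr(\rho UE)=\max_{E\in\diaguni_d}\min_\rho|\Tr(\rho UE)|$, completing the proof.
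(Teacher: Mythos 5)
Your overall architecture is essentially the paper's: the paper likewise relaxes the torus $\diaguni_d$ to the polydisc of diagonal contractions, applies a minimax theorem (Fan's, rather than Sion's) to $\Re\Tr(\rho U E)$ on the product of the two compact convex sets, and then descends from the polydisc back to $\diaguni_d$. Your weak-duality remark, the identity $\max_{E\in\diaguni_d}|\Tr(\rho UE)|=\sum_i|\bra{i}\rho U\ket{i}|=\max_{T}\Re\Tr(\rho UT)$, and substep~(a) (rotating by a global phase $e^{\ii\theta}$, which stays inside $\diaguni_d$) are all fine.

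The genuine gap is substep~(b), which you correctly identify as the crux but do not actually close. Concavity of $m(T)=\min_\rho\Re\Tr(\rho UT)$ in each coordinate $T_{jj}$ does not force the maximum over a disc onto its boundary circle: a minimum of $\R$-affine functions of one complex variable, e.g. $\min(1\pm\Re t,\,1\pm\Im t)=1-\max(|\Re t|,|\Im t|)$, is concave and positively relevant in exactly the way $m$ is for fixed $\rho$'s, yet attains its maximum over the unit disc only at the interior point $t=0$. Positive homogeneity pins only $\max_i|T^{\ast}_{ii}|$ to $1$, not the remaining coordinates, and ``product structure of $D$'' is not by itself a mechanism. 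What is actually needed — and what the paper supplies — is a spectral fact about unitaries: writing the polydisc maximizer as $E_0=F_0D$ with $F_0\in\diaguni_d$ and $0\le D_{ii}\le 1$, one shows $\min_\rho|\Tr(\rho VD)|\le\min_\rho|\Tr(\rho V)|$ for any unitary $V$ (here $V=UF_0$), by noting that $\Tr(\proj{\lambda_k}VD)=\lambda_k\bra{\lambda_k}D\ket{\lambda_k}$ places contracted copies of the extreme eigenvalues of $V$ inside $W(VD)$, so the convex numerical range can only move toward the origin when the diagonal is shrunk; combined with saddle-point optimality this forces $F_0$ to be a maximizer on the torus. Without an argument of this kind (or an equivalent one exploiting the eigenstructure of $UT^{\ast}$ rather than bare concavity), your chain of equalities stops at $\max_{T\in D}m(T)$ and the lemma is not established.
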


The third and final lemma tells us about the optimal discriminator.
\begin{lemma}\label{th: state}
Let
\begin{itemize}
\item $E_0 \in \diaguni_d$ and $U \in \UU_d$, $D(E) = \min_{\rho\in \Omega_d} |\Tr \rho UE|$,

\item $D(E_0)>0$,

\item $\lambda_1,\lambda_d$ denote the eigenvalues of $UE_0$ such that the arc
between them is the largest,

\item $P_1$, $P_d$ denote the projectors on the subspaces spanned by the eigenvectors corresponding to $\lambda_1$, $\lambda_d$.
\end{itemize}
Then, the function $|\Tr(\rho UE)|$ has saddle point in $(\rho_0, E_0)$ if and
only if there exist states $\rho_1, \rho_d$ such that
\begin{itemize}
\item $\rho_1=P_1 \rho_1 P_1$,
\item $\rho_d=P_d \rho_d P_d$,
\item $\diag (\rho_1)=\diag (\rho_d)$.
\end{itemize}
\end{lemma}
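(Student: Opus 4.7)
The plan rests on two facts about the normal operator $UE_0$: its numerical range equals the convex hull of its spectrum (which lies on the unit circle), and the assumption $D(E_0)>0$ forces the arc from $\lambda_d$ to $\lambda_1$ not meeting the remaining spectrum to strictly exceed $\pi$. These imply that the unique point of $W(UE_0)$ of minimum modulus is the chord midpoint $\tfrac12(\lambda_1+\lambda_d)$, and since this is an interior point of the edge $[\lambda_1,\lambda_d]$ of the spectral polytope, the only way to realize it as a convex combination of eigenvalues is $\tfrac12\lambda_1+\tfrac12\lambda_d$. Combined with $\sigma\geq 0$ (which promotes $\Tr\sigma P_k=0$ to $\sigma P_k=0$ for the other spectral projectors), this yields the structural fact: any $\sigma\in\Omega_d$ attaining the minimum of $|\Tr\sigma UE_0|$ is supported on $\mathrm{range}(P_1+P_d)$ with $\Tr P_1\sigma=\Tr P_d\sigma=\tfrac12$.

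For the $(\Leftarrow)$ direction I would set $\rho_0:=\tfrac12(\rho_1+\rho_d)$. The identity $UE_0 P_k=\lambda_k P_k$ combined with $\rho_k=P_k\rho_k P_k$ gives $\Tr\rho_k UE_0=\lambda_k$, so $\Tr\rho_0 UE_0=\tfrac12(\lambda_1+\lambda_d)=D(E_0)$: $\rho_0$ is an argmin at $E_0$. For the argmax at $\rho_0$, writing $\rho_0 U=(\rho_0 UE_0)E_0^\dagger$ and using the spectral identity together with the hypothesis $\diag(\rho_1)=\diag(\rho_d)$ yields
\[
e_{0,i}(\rho_0 U)_{ii}\;=\;\frac{\lambda_1+\lambda_d}{2}\,(\rho_1)_{ii},
\]
which are non-negative real multiples of the common complex number $\tfrac12(\lambda_1+\lambda_d)$. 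Summing gives $\sum_i|(\rho_0 U)_{ii}|=|\Tr\rho_0 UE_0|$; since $\max_E|\Tr\rho_0 UE|=\|\diag(\rho_0 U)\|_1$ (as in the computation preceding Program~\ref{prog:convex-prog-diamond}), $E_0$ is an argmax.

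For the $(\Rightarrow)$ direction, a saddle $(\rho_0,E_0)$ gives $\Tr\rho_0 UE_0=\tfrac12(\lambda_1+\lambda_d)$, so the structural fact produces the decomposition $\rho_0=\tfrac12\rho_1+\tfrac12\rho_d+\tau$ with $\rho_k:=2P_k\rho_0 P_k$ and $\tau=P_1\rho_0 P_d+P_d\rho_0 P_1$ purely off-diagonal between the two spectral subspaces. Repeating the diagonal computation with $\tau$ included,
\[
e_{0,i}(\rho_0 U)_{ii}=\tfrac12\bigl[\lambda_1(\rho_1)_{ii}+\lambda_d(\rho_d)_{ii}\bigr]+\lambda_d(P_1\rho_0 P_d)_{ii}+\lambda_1\overline{(P_1\rho_0 P_d)_{ii}}.
\]
In a rotated frame where $\tfrac12(\lambda_1+\lambda_d)$ is real positive (so $\lambda_1=e^{i\phi},\lambda_d=e^{-i\phi}$ with $\phi\in(0,\pi/2)$), the last two terms are manifestly real, while the imaginary part of the first piece is $\tfrac{\sin\phi}{2}[(\rho_1)_{ii}-(\rho_d)_{ii}]$. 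The saddle condition forces each $e_{0,i}(\rho_0 U)_{ii}$ to share the positive real phase of $\tfrac12(\lambda_1+\lambda_d)$, so the imaginary parts vanish for every $i$, yielding $\diag(\rho_1)=\diag(\rho_d)$.

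The main obstacle is the structural fact itself, namely that minimality of $|\Tr\sigma UE_0|$ forces $\sigma$ to be block-diagonal across $P_1+P_d$ with equal masses; this requires the uniqueness of the affine decomposition of an interior edge-point of the spectral polytope together with positive semidefiniteness, with some care needed when $\lambda_1$ or $\lambda_d$ is degenerate. Once granted, the rest is bookkeeping in the eigenbasis of $UE_0$, with the pleasant feature that the $P_1$--$P_d$ cross term drops out of the imaginary part automatically, cleanly isolating the diagonal-matching condition.
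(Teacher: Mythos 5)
Your proposal is correct and follows essentially the same route as the paper's proof: the $(\Leftarrow)$ direction via $\rho_0=\tfrac12(\rho_1+\rho_d)$ and phase alignment of $\diag(\rho_0 U E_0)$, and the $(\Rightarrow)$ direction via the block decomposition of $\rho_0$ over $P_1,P_d$ and the observation that a saddle point forces all diagonal entries $z_i$ to share a common phase, killing the imaginary parts and yielding $\diag(\rho_1)=\diag(\rho_d)$. Your ``structural fact'' (support on $\mathrm{range}(P_1+P_d)$ with equal masses $\tfrac12$, derived from the unique closest point of the spectral polytope to the origin being the chord midpoint) is exactly what the paper asserts without proof, so your write-up is if anything more complete.
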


\subsection{Proof of Theorem~\ref{th:ProjPOVM=minUnitary}}
\begin{proof}[Proof of Theorem~\ref{th:ProjPOVM=minUnitary}]
	For the case when  $\min_{E \in \diaguni_d} \|\Phi_{UE} - 
	\Phi_\1\|_\diamond = 2$, we know that according to 
	Corollary~\ref{th:diamond-equivalence} $\|\PP_U - \PP_{\1}\|_\diamond =2 $ 
	
	Now, we will show the remaining part in the case when 
	$\min_{\diaguni_d}\|\Phi_{UE} - |\Phi_{\1}\|_{\diamond} < 2$.
We utilize Lemma~\ref{th: saddle} to obtain existence of saddle point 
$(\rho_0, E_0)$ and Lemma~\ref{th: state} to define new state
\begin{equation}
\tau = \frac12 (\rho_1 + \rho_d )
\end{equation}
and calculate 
$\left\|(\1\otimes 
\sqrt{\tau}) 
J(\PP_\1-\PP_{UE_0}) (\1\otimes 
\sqrt{\tau})\right\|_1$ according to eq. \eqref{eq:igui}. Direct calculation 
gives us
\begin{equation}
\begin{split}
&\sum_{i=1}^{d} \sqrt{\left(\bra{i} \tau \ket{i} + \bra{u_i} \tau 
	\ket{u_i} 
	\right)^2 - 4 \left\vert \bra{i} \tau \ket{u_i} \right\vert^2} =2 
	\sqrt{1-\left\vert \frac{\lambda_1+\lambda_d}{2} \right\vert^2},
\end{split}
\end{equation}
where $\left\vert \frac{\lambda_1+\lambda_d}{2} \right\vert= \vert \Tr 
\tau UE_0 \vert$.
To end this proof we use Lemma~\ref{lem: inequality} and write
\begin{equation}
\begin{split}
&2\sqrt{1-\left\vert \frac{\lambda_1+\lambda_d}{2} 
\right\vert^2}=\left\|(\1\otimes 
\sqrt{\tau}) 
J(\PP_\1-\PP_{U}) (\1\otimes 
\sqrt{\tau})\right\|_1\\
\leq& \|\PP_U - \PP_{\1}\|_\diamond \leq \min_{E \in \diaguni} \|\Phi_{UE} - 
\Phi_\1\|_\diamond=2\sqrt{1-\left\vert \frac{\lambda_1+\lambda_d}{2} 
\right\vert^2}
\end{split}.
\end{equation}
\end{proof}

\subsection{Proofs of Lemmas~\ref{lem: inequality},~\ref{th: saddle} and ~\ref{th: state}}\label{sec:appendix-lemmas}

\begin{proof}[Proof of Lemma~\ref{lem: inequality}]
	Let $\rho^\top$ be a discriminator of $\PP_U - \PP_{\1}$. Thus
	\begin{equation}
	\begin{split}
	\|\PP_U - \PP_{\1}\|_\diamond &=
	\left\|
	\left(\1 \otimes \sqrt{\rho^\top}\right) J_{\PP_U - 
		\PP_{\1}} \left(\1 \otimes \sqrt{\rho^\top}\right)
	\right\|_1\\
	&=
	\left\|\sum_i \proj{i} \otimes \sqrt{\rho^\top}M_i^\top \sqrt{\rho^\top} 
	\right\|_1, 
	\end{split}
	\end{equation}
	where $M_i = \proj{i} - \proj{u_i} = \proj{i} - UE \proj{i} E^\dagger 
	U^\dagger$. Now, using the operational definition of the trace norm 
	($\|A\|_1 = 
	\max_{V \in\UU_d} |\tr 
	AV|$) and the fact that the matrix is in a block form, we obtain
	\begin{equation}
	\begin{split}
	&\left\|\sum_i \proj{i} \otimes \sqrt{\rho^\top}M_i^\top \sqrt{\rho^\top} 
	\right\|_1
	=\sum_i \tr (\sqrt{\rho} M_i \sqrt{\rho} V_i)\\
	&=
	\tr\left( \sum_i \proj{i} \otimes \sqrt{\rho}M_i \sqrt{\rho}\right)
	\left( \sum_i \proj{i}\otimes V_i\right)\\
	\end{split}
	\end{equation}
	where $V_i$ is a unitary matrix, which is optimal for $i^{\text{th}}$ block.
	Next we note that
	\begin{widetext}
		\begin{equation}
		\begin{split}
		&\tr \left( \sum_i \proj{i} \otimes \sqrt{\rho} \left( \proj{i} - UE 
		\proj{i} 
		E^\dagger U^\dagger \right) \sqrt{\rho}\right)
		\left( \sum_i \proj{i}\otimes V_i \right) \\
		=& \tr \left( \sum_{ij}\ketbra{i}{j} \otimes \sqrt{\rho} \left( 
		\ketbra{i}{j} - 
		UE \ketbra{i}{j}E^\dagger U^\dagger \right) \sqrt{\rho}\right)
		\left( \sum_i \proj{i}\otimes V_i\right)\\
		\leq&
		\max_{V\in \UU(d^2)}
		\left|\tr
		\left( \sum_{ij}\ketbra{i}{j} \otimes \sqrt{\rho} \left( \ketbra{i}{j} 
		- 
		UE \ketbra{i}{j} E^\dagger U^\dagger \right) \sqrt{\rho}\right)
		V \right|
		=
		\left\|
		\sum_{ij}\ketbra{i}{j} \otimes \sqrt{\rho} \left( \ketbra{i}{j} - 
		UE \ketbra{i}{j}E^\dagger U^\dagger \right) \sqrt{\rho}
		\right\|_1 \\
		=&
		\left\|
		\left(\1  \otimes \sqrt{\rho}\right) 
		\left(\projV{\1} -\projV{(UE) ^\top}\right)
		\left(\1  \otimes \sqrt{\rho}\right)
		\right\|_1
		\leq \|
		\Phi_{(UE)^\top} - \Phi_{\1}
		\|_\diamond = 
		\|
		\Phi_{UE} - \Phi_{\1}
		\|_\diamond.
		\end{split}
		\end{equation}
	\end{widetext}
\end{proof}

The next proof uses Corollary~\ref{th:diamond-equivalence} and Proposition~\ref{th:diamond-2}. Their proofs are stated in Appendix~\ref{sec:appendix}.
\begin{proof}[Proof of Lemma~\ref{th: saddle}]
	Let $\min_{E \in \diaguni_d} \|\Phi_{UE} - \Phi_\1\|_\diamond =2$. 
	Utilizing Corollary~\ref{th:diamond-equivalence} and 
	Proposition~\ref{th:diamond-2} there exists a state $\rho_0$ such 
	that for each $E \in \diaguni_d$ $|\Tr (\rho_0 UE)|=0$. Fallowing 
	Proposition~\ref{th:unit_channel} we obtain
	\begin{equation}
	\begin{split}
	\max_{E \in \diaguni_d} \min_{\rho \in \Omega_d} |\Tr(\rho 
	U 
	E)|=0=\max_{E \in \diaguni_d} |\Tr(\rho_0 UE)| \\ =\min_{\rho \in \Omega_d} 
	\max_{E \in \diaguni_d} |\Tr(\rho U	E)|.
	\end{split}
	\end{equation}
	
	Now assume $\min_{E \in \diaguni_d} \|\Phi_{UE} - \Phi_\1\|_\diamond <2$. 
	We have
	\begin{equation}
	\begin{split}
	\min_{E \in \diaguni_d} \|\Phi_{UE} - \Phi_\1\|_\diamond 
	&= 2 \sqrt{1- \max_{E \in \diaguni_d} \min_{\rho \in \Omega_d} \vert 
		\Tr \rho UE 
		\vert^2}.
	\end{split}
	\end{equation}	
	In the case of $\rho_0 \in \Omega_d$ and $E_0 \in \diaguni_d$ which 
	saturate 
	$\min_{E 
		\in \diaguni_d} \|\Phi_{UE} - \Phi_\1\|_\diamond$, we have that $0 
		\notin 
	W(UE_0)$. 
	
	Let $\diagmodul$ be the 
	set of diagonal matrices $E$ such that $\vert E_{ii} \vert \leq 1$.
	The set of density matrices and the set $\diagmodul$ are both compact and 
	convex. 
	Moreover, the sets $\{E \in \diagmodul: \Re(\Tr(\rho U E))=\max_{D \in 
		\diagmodul} \Re(\Tr(\rho U D))\}$ and  
	$\{\rho \in \Omega_d: \Re(\Tr(\rho U E))=\min_{\sigma \in \Omega_d} 
	\Re(\Tr(\sigma U E))\}$ are convex. 
	Since all assumptions of the Theorem 3 in \cite{fan1952fixed} are 
	fulfilled, 
	we obtain the existence of saddle points, and therefore 
	\begin{equation}\label{eq:minmaxrealis}
	\min_{\rho \in \Omega_d} \max_{E \in \diagmodul} \Re \left(\Tr(\rho U 
	E)\right)=\max_{E \in \diagmodul} \min_{\rho \in \Omega_d} \Re 
	\left(\Tr(\rho 
	U 
	E)\right).
	\end{equation}
	One can note that it implies that for a saddle point $(\rho_0, E_0)$ we 
	have 
	$\Re 
	(\Tr 
	\rho_0U E_0)=\Tr \rho_0 U E_0 = |\Tr \rho_0 U E_0|$. Moreover, $\max_E |\Tr 
	\rho_0 UE|=\sum_i |\bra{i}\rho_0U\ket{i}|=\Tr \rho_0U E_0$ and $\Tr \rho_0 
	U 
	E_0 = \min_\rho |\Tr \rho U E_0|$. That means $(\rho_0,E_0)$ is the saddle 
	point of $|\Tr \rho U E|$ and 
	\begin{equation}
	\min_{\rho \in \Omega_d} \max_{E \in \diagmodul} |\Tr(\rho U 
	E)|=\max_{E \in \diagmodul} \min_{\rho \in \Omega_d} |\Tr(\rho 
	U 
	E)|.
	\end{equation}
	Let us write $E_0=F_0 D$, where $F_0 \in \diaguni_d$ and $D$ is a diagonal 
	matrix with $0\leq D_{ii} \leq 1$. We will 
	show that we have the saddle point also for $(\rho_0,F_0)$.
	First of all, we will observe that for arbitrary $U \in \UU_d$
	\begin{equation}
	\min_\rho |\Tr \rho U| \geq \min_\rho |\Tr \rho U D|.
	\end{equation}
	For the case when $0 \in W(U)$, for some probability vector $p$ we have 
	$\sum_i 
	\lambda_i p_i=0$, where $\lambda_i$ are the eigenvalues of $U$. If there 
	exists 
	$i$ such that $\bra{\lambda_i}D\ket{\lambda_i}=0$, then $|\Tr 
	\proj{\lambda_i} 
	U D|=0$. Otherwise, we can take the state $\rho=\sum_i q_i 
	\proj{\lambda_i},$ 
	where 
	$q_i=\frac {p_i}{\bra{\lambda_i}D\ket{\lambda_i}}$ and notice that $0 \in 
	W(UD)$. In 
	the case when $0 \not\in W(U)$ for the most distant pair of eigenvalues 
	$\lambda_1, 
	\lambda_d$,  using T\"oplitz-Hausdorff theorem, we have an inclusion of 
	the  
	interval 
	in 
	a numerical range
	\begin{equation}
	\begin{split}
	&[\Tr \proj{\lambda_1}UD, 
	\Tr\proj{\lambda_d}UD]\\
	=&[\lambda_1\bra{\lambda_1}D\ket{\lambda_1},\lambda_d 
	\bra{\lambda_d}D\ket{\lambda_d}] \subset W(UD).
	\end{split}
	\end{equation} 
	In our case using the optimality condition we receive $\min_\rho |\Tr \rho 
	U F_0| = 
	\min_\rho 
	|\Tr \rho U F_0D|$. Now, we are 
	ready to check whether $(\rho_0, F_0)$ is the saddle point. We write
	\begin{equation}
	\begin{split}
	&|\Tr \rho_0 U F_0| \leq \max_{E \in \diagmodul} |\Tr \rho_0UE|=|\Tr \rho_0 
	U 
	E_0| \\
	=& \min _\rho |\Tr \rho U F_0 D| =\min_\rho |\Tr \rho UF_0| \leq |\Tr 
	\rho_0 U 
	F_0|.
	\end{split}
	\end{equation}
	The above gives us information that $$|\Tr \rho_0 U F_0|=\min_\rho |\Tr 
	\rho 
	UF_0|=\max_{E \in \diagmodul} |\Tr \rho_0UE|.$$
	That means
	\begin{equation}
	\min_{\rho \in \Omega_d} \max_{E \in \diaguni_d} |\Tr(\rho U 
	E)|=\max_{E \in \diaguni_d} \min_{\rho \in \Omega_d} |\Tr(\rho 
	U 
	E)|.
	\end{equation}
\end{proof}

\begin{proof}[Proof of Lemma~\ref{th: state}]
First we show the reverse implication. Define $\rho_0= \frac12
(\rho_1+\rho_d)$. We see that $|\Tr(U E_0 \rho_0)|= D(E_0)$.
For arbitrary $E \in \diaguni_d$ direct calculation gives us
\begin{equation}
|\Tr(U E_0 \rho_0)| \geq |\Tr(U E \rho_0)| \geq \min_{\rho \in \Omega_d} 
|\Tr(U E \rho)|
\end{equation}
That means $D(E_0) \geq D(E)$ and 
$|\Tr(U E_0 \rho_0)|=\min\limits_\rho |\Tr(U E_0 \rho)|=\max\limits_E 
|\Tr(U E \rho_0)|$.

Now we prove the direct implication. Without loss of generality we may assume
$\lambda_1=\lambda$ and $\lambda_d=\overline{\lambda}$. Since $\rho_0$ gives
minimum of the $|\tr \rho U E|$, thus $\rho_0$ is supported on the subspace
spanned by the range of $P_1$ and $P_d$, i.e.
\begin{equation}
\rho_0 = P \rho_0 P \text{ for } P= P_1+P_d.
\end{equation}
We may write 
\begin{equation}
\rho_0 = P \rho_0 P = P_1 \rho_0 P_1 + P_d \rho_0 P_d + P_1 \rho P_d + P_d 
\rho_0 P_1
\end{equation}
and define 
\begin{equation}
\begin{split}
\rho_1&= P_1 \rho_0 P_1, \\
\rho_d&= P_d \rho_0 P_d, \\
\rho_{1d}&= P_1 \rho_0 P_d, \\
\rho_{d1}&= P_d \rho_0 P_1.
\end{split}
\end{equation}
Note that the optimality forces  $\tr \rho_1 = \tr \rho_d = \frac12.$
Now we write
\begin{equation}
z_i = \bra{i} \rho_0 U E_0\ket{i} = 
\lambda \bra{i} \rho_1\ket{i} 
+
\overline{\lambda} \bra{i} \rho_d\ket{i} 
+
2 \Re( \lambda \bra{i} \rho_{d1}\ket{i} ).
\end{equation}
We have $\sum_i z_i = \frac{\lambda + \overline{\lambda}}{2}$. If elements $z_i$
have different phases, then by additional diagonal unitary matrix one can
increase the value of the sum and contradict to the fact that $(\rho_0, E_0)$ is
a saddle point. Therefore, we conclude that all elements have the same phase and
therefore we obtain that
\begin{equation}
\bra{i} \rho_1\ket{i}  = \bra{i} \rho_d\ket{i} \text{ for all } i.
\end{equation}
\end{proof}

\section{Proof of Corollary~\ref{th:diamond-equivalence}}\label{sec:appendix}
The proof is based on Proposition~\ref{th:diamond-2} and Lemma~\ref{th:diamond2<->numerical-range}. These proofs are stated later in this appendix.

\subsection{Proof of 
Corollary~\ref{th:diamond-equivalence}\label{sec:proof-corollary-diamond-equivalence}}

\begin{proof}[Proof of Corollary~\ref{th:diamond-equivalence}]
		Let us assume that $\PP_U$ is perfectly distinguishable from 
		$\PP_{\1}$. 
		Then,
		from Proposition~\ref{th:diamond-2} 
		there exists a density matrix such that 
		\begin{equation}
		\diag(U^\dagger \rho) = 0.
		\end{equation}
		Hence, for all $E \in \diaguni_d$ we have $\diag(E^\dagger U^\dagger 
		\rho) = 0$. Therefore $0 \in W(E^\dagger U^\dagger)$, and thus unitary 
		channel  
		$\Phi_{UE}$ is perfectly distinguishable from the identity channel.
		
		Now, we assume that for all $E \in \diaguni_d$ we have $0 \in 
		W(E^\dagger U^\dagger)$. 
		We will show that for any diagonal matrix $D$ (not necessarily 
		unitary), we 
		have $0 \in W\left(UD + D^\dagger U^\dagger \right)$ (see Lemma 
		\ref{th:diamond2<->numerical-range}).
		One may assume that $D$ is invertible as otherwise we would have 
		$\bra{\psi} \left(UD + D^\dagger U^\dagger \right) \ket{\psi} = 0$
		for $\ket{\psi} \in \ker(D)$. We write
		\begin{equation}
		UD = U E D_+,
		\end{equation}
		where $E \in \diaguni_d$ and $D_+$ is a strictly positive diagonal 
		matrix. Let $V$ be a unitary matrix such that
		\begin{equation}
		U E = V \diag^\dagger(\lambda) V^\dagger,
		\end{equation}
		where $\lambda$ denotes eigenvalues of $U E$. From our assumption we 
		have 
		that 
		there 
		exists a probability vector $p$, such that 
		\begin{equation}
		\sum_i \lambda_i p_i =0.
		\end{equation}
		Now we define a density matrix 
		\begin{equation}
		\sigma = V \diag^\dagger(q) V^\dagger,
		\end{equation}
		where
		\begin{equation}
		q_i = c^{-1} \frac{p_i}{
			\bra{i}V^\dagger D_+ V\ket{i}}; 
		\ \  c = {\sum_{j} \frac{p_j}{\bra{j}V^\dagger D_+ V\ket{j}}}.
		\end{equation}
		Using this we obtain
		\begin{equation}
		\tr U D \sigma =
		c^{-1} \sum_i \lambda_i p_i = 0.
		\end{equation}
		Thus $0 \in W\left(UD\right)$ and therefore $ 0 \in W\left(UD + 
		D^\dagger 
		U^\dagger \right)$.
\end{proof}

\subsection{Proof of Proposition~\ref{th:diamond-2}}\label{sec:proof-diamond-2}
\begin{proof}[Proof of Proposition~\ref{th:diamond-2}]
	Let $\rho \in \Omega_d$ be a discriminator. Then
	\begin{equation} \label{eq:igui}
	\begin{split}
	& \| \PP_\1-\PP_U \|_\diamond = 
	\left\| \sum_{i=1}^d \proj{i} \otimes
	\left(\sqrt{\rho} \left( 
	\proj{i} - \proj{u_i} \right) \sqrt{\rho} \right) 
	\right\|_1\\
	&= \sum_{i=1}^d\tr \big\vert \sqrt{\rho} \proj{i} \sqrt{\rho} - 
	\sqrt{\rho}\proj{u_i} \sqrt{\rho} \big\vert \\
	&=\sum_{i=1}^{d} \sqrt{\left(\bra{i} \rho \ket{i} + \bra{u_i} \rho 
	\ket{u_i} 
		\right)^2 - 4 \left\vert \bra{i} \rho \ket{u_i} \right\vert^2},
	\end{split}
	\end{equation}
	where the last equality follows from the singular value decomposition for 
	rank-two 
	matrices.
	
	Assume that $\| \PP_\1 - \PP_U\|_\diamond=2$. If for any  
	state $\rho$, the condition~\eqref{eq:diag} is not satisfied, 
	\ie\ $\forall_\rho \exists_i \  \bra{i}\rho \ket{u_i} \neq 0$, then
	\begin{equation}
	\begin{split}
	\phantom{<}&\sum_{i=1}^{d} \sqrt{\left(\bra{i} \rho \ket{i} + \bra{u_i} 
	\rho \ket{u_i} 
		\right)^2 - 4 \left\vert \bra{i} \rho \ket{u_i} \right\vert^2} \\ 
		<& \sum_{i=1}^d 
	\left( \bra{i} \rho \ket{i} +\bra{u_i} \rho \ket{u_i} \right) =2,
	\end{split}
	\end{equation}
	which gives a contradiction.
	
	Next, assume that there exists a state $\rho$ such that $\bra{i} \rho 
	\ket{u_i} 
	= 0$ for all $i$. From eq. \eqref{eq:igui} we have $\| \PP_\1-\PP_U 
	\|_\diamond 
	=2$.
\end{proof}

\subsection{Proof of 
Lemma~\ref{th:diamond2<->numerical-range}}\label{sec:proof-diamond2<->numerical-range}

\begin{proof}[Proof of Lemma~\ref{th:diamond2<->numerical-range}]
	Perfect distinguishability between $\PP_U$ and $\PP_{\1}$ means, by
	Proposition~\ref{th:diamond-2}, there exists a discriminator $\rho \in 
	\Omega_d$
	we have
	\begin{equation}
	\diag (U^\dagger \rho)= 0.
	\end{equation}
	If this condition is satisfied, we also have $\diag (D^\dagger U^\dagger 
	\rho)= 
	0$ for any diagonal matrix $D$, and therefore $0 \in  W\left(UD + D^\dagger 
	U^\dagger \right)$.
	
	Now, let us assume that for all diagonal matrices $D$ we have $0 \in 
	W\left(UD 
	+ D^\dagger U^\dagger \right)$.
	We define a matrix
	\begin{equation}
	D= \diag^\dagger(x_1 - \ii x_{d+1}, x_2 - \ii x_{d+2}, \dots ,x_d - \ii 
	x_{2d}).
	\end{equation}
	Thus, there exists a nonzero, $x$-dependent state $\ket{\psi}$, such that 
	\begin{equation}
	\bra{\psi} \left(UD + D^\dagger U^\dagger \right)\ket{\psi} = 0.
	\end{equation}
	This can be equivalently expressed as 
	\begin{equation}
	\bra{\psi} \sum x_i A_i \ket{\psi} = 0.
	\end{equation}
	Using Lemma~\ref{lemma:perfect<=>no sign} we arrive at our result.
\end{proof}

\section*{Acknowledgements}
This work was supported by the Polish National Science Centre under project 
numbers 2016/22/E/ST6/00062 (ZP, AK, RK) and 2015/18/E/ST2/00327 ({\L}P).
We would like to thank Karol Horodecki for fruitful discussions.

\newpage

\bibliographystyle{apsrev}
\bibliography{measurement_distance}

\end{document}